\pgfplotsset{compat=newest}
\newcolumntype{H}{>{\setbox0=\hbox\bgroup}c<{\egroup}@{}}
\providecommand{\@fourthoffour}[4]{#4}
\def\fixstatement#1{%
  \AtEndEnvironment{#1}{%
    \xdef\pat@label{\expandafter\expandafter\expandafter
      \@fourthoffour\csname#1\endcsname\space\@currentlabel}}}
\globtoksblk\prooftoks{1000}
\newcounter{proofcount}
\long\def\proofatend#1\endproofatend{%
  \edef\next{\noexpand\begin{proof}[Proof of \pat@label]}%
  \toks\numexpr\prooftoks+\value{proofcount}\relax=\expandafter{\next#1\end{proof}}
  \stepcounter{proofcount}}
\def\printproofs{%
  \count@=\z@
  \loop
    \the\toks\numexpr\prooftoks+\count@\relax
    \ifnum\count@<\value{proofcount}%
    \advance\count@\@ne
  \repeat}
\newtheorem{thm}{Theorem}
\newtheorem{proposition}{Proposition}
\newtheorem{lemma}{Lemma}
\newtheorem{corollary}{Corollary}
\theoremstyle{definition}
\newtheorem{defn}{Definition}
\newtheorem{exmp}{Example}
\newcommand{\argmax}{\operatornamewithlimits{argmax}}
\definecolor{ALLIN}{HTML}{151515}
\definecolor{BORD}{HTML}{000000}
\definecolor{POSSIBLY}{HTML}{A4A4A4}
\definecolor{EMPTY}{HTML}{FFFFFF}
\begin{document}
\title{Assortment Optimization \\ under the Sequential Multinomial Logit Model}
\author{
Alvaro Flores\footnote{College of Engineering \& Computer Science, Australian National University (alvaro.flores@anu.edu.au).}
\and
Gerardo Berbeglia\footnote{Melbourne Business School, The University of Melbourne (g.berbeglia@mbs.edu).}
\and
Pascal Van Hentenryck\footnote{Industrial and Operations Engineering \& Computer Science and Engineering, University of Michigan, Ann Arbor (pvanhent@umich.edu).}
}

\maketitle

\begin{abstract}
We study the assortment optimization problem under the
  \emph{Sequential Multinomial Logit} (SML), a discrete choice model
  that generalizes the multinomial logit (MNL). Under the SML model,
  products are partitioned into two levels, to capture differences in
  attractiveness, brand awareness and, or visibility of the products in the market. When a
  consumer is presented with an assortment of products, she first considers
  products in the first level and, if none of them is purchased,
  products in the second level are considered. This model is a special case of the
  Perception-Adjusted Luce Model (PALM) recently proposed by
  \citet{echenique2018}. It can explain many behavioral phenomena such as the
  attraction, compromise, similarity effects and choice overload which cannot be
  explained by the MNL model or any discrete choice model based on
  random utility. In particular, the SML model allows violations to
   \emph{regularity} which states that the probability of
  choosing a product cannot increase if the offer set is enlarged.

  This paper shows that the seminal concept of revenue-ordered
  assortment sets, which contain an optimal assortment under the MNL model, can be
  generalized to the SML model. More precisely, the paper proves that all
  optimal assortments under the SML are revenue-ordered by level, a
  natural generalization of revenue-ordered assortments that contains, at most, a quadratic number of assortments. As a
  corollary, assortment optimization under the SML is polynomial-time
  solvable. This result is particularly interesting given that the SML model does not satisfy the regularity condition and, therefore,
  it can explain choice behaviours that cannot be explained by any choice model based on random utility.\\ \\
  \textbf{Keywords}: Revenue management; Assortment optimization; assortment planning; discrete choice models; revenue-ordered assortments.
\end{abstract}	

\section{Introduction}
\label{intro}

The assortment optimization problem is a central problem in revenue
management, where a firm wishes to offer a set of products with the
goal of maximizing the expected revenue. This problem has many
relevant applications in retail and revenue management
\citep{KokAssortment}. For example, a publisher might need to decide
the set of advertisements to show, an airline must decide which fare
classes to offer on each flight, and a retailer needs to decide which
products to show in a limited shelf space.

The first consumer demand models studied in revenue management were
based on the independent demand principle. This principle stated that
customers decide beforehand which product they want to purchase: If
the product is available, they make the purchase and, otherwise, they
leave without purchasing. In these models, the problem of finding the
best offer set of products is computationally simple, but this
simplicity comes with an important drawback: These models do not
capture the substitution effects between products. That is, they
cannot model the fact that, when a consumer cannot find her/his
preferred product, she/he may purchase a substitute product.  It is
well-known that choice models that incorporate substitution effects
improve demand predictions
\citep{talluri2004revenue,Newman2014,vanRyzin2014,
  Ratliff2008,Vulcano2010}. One of the most celebrated discrete choice
models is the \emph{Multinomial Logit} (MNL)
\citep{luce1959,McFadden1974}. Under the MNL model, the assortment
problem admits a polynomial-time algorithm \citep{talluri2004revenue}.
However, the model suffers from the independence of irrelevant
alternatives (IIA) property \citep{ben1985} which says that, when a
customer is asked to choose among a set of alternatives $S$, the ratio
between the probability of choosing a product $x \in S$ and the
probability of choosing $y \in S$ does not depend on the set $S$. In
practice, however, the IIA property is often violated.
To overcome this limitation, more complex
choice models have been proposed in the literature such as the Nested
Logit model \citep{williams1977formation}, the latent class MNL
\citep{GreeneLatent}, the Markov Chain model \citep{blanchet2016markov},
and the exponomial model \citep{daganzo1979multinomial,exponomial2016}. All these
models satisfy the following property: The probability of choosing an
alternative does not increase if the offer set is enlarged. Despite
the fact that this property (known as regularity) appears natural, it
is well-known that it is sometimes violated by individuals
\citep{Debreu1960,Tversky1972choice,Tversky1972elim,Tversky_reg}. Recently, there
have been efforts to develop discrete choice models that can explain complex choice behaviours such as the
violation of regularity, one of the most prominent examples is the perception-adjusted Luce model (PALM) \citep{echenique2018}.

While the PALM and the nested logit are both conceived as sequential
choice processes, they have important differences. Probably the most
 important difference is that the nested logit model belongs to the family of
random utility models (RUM)\footnote{This is unless nest specific parameters
	 are greater than one, a case rarely studied in the literature.},
and therefore can't accommodate regularity violations. On the other hand
the PALM does not belong to the RUM class, and allows regularity
violations as well as choice overload. In terms of the choice process,
 in the nested logit model customers first select a nest, and then a product within the nest.
In the perception-adjusted Luce's model products are separated by preference levels,
so when a customer is offered a set of products, she first chooses among the offered
products belonging to the lowest available level, and if none of them are chosen then
she selects among the next available level, and keeps repeating this process until no more levels
are available or until a purchase is made.

\subsection{Our Contributions}
\label{contribution}
In this paper, we study the assortment optimization problem for a
two-stage discrete choice model model that generalizes the classical
Multinomial Logit model. This model, which we call the
\emph{Sequential Multinomial Logit} (SML) for brevity, is a special
case of the recently proposed model known as the perception-adjusted
Luce model (PALM) \citep{echenique2018}.
In the SML model, products are partitioned \emph{a priori} into two sets,
which we call levels. This product segmentation
into two levels can capture different degree of attractiveness.
For example, it can model customers who
check promotions/special offers first before considering the purchase of regular-priced products.
It can also model consumer brand awareness, where customers first check products
of	specifics brands before considering the rest. Finally,
 the SML can model product visibilities in a market, where products are placed in
specific positions (aisles, shelves, web-pages, etc.) that
induce a sequential analysis, even when all the products are at sight.
Our main contribution is to provide a polynomial-time algorithm for the
assortment problem under the SML and to give a complete
characterization of the resulting optimal assortments.

A key feature of the PALM and the SML, is their ability to capture several
effects that cannot be explained by any choice model based in random
utility (such as for example the MNL, the mixed MNL, the markov chain model, and the stochastic preference
models). Examples of such effects include attraction,
\citep{Doyle1999}, the compromise effect \citep{Simonson1992}, the similarity effect
\citep{Debreu1960,Tversky1972elim}, and the paradox of choice (also known as choice overload)
\citep{iyengar2000choice,schwartz2004,Haynes2009,chernev2015choice}. These effects are discussed in the next
section. In particular, the SML allows for violations
of regularity. There are very few analyses of assortment problems under a choice model outside the RUM class.

Our algorithm is based on an in-depth analysis of the structure of the
SML. It exploits the concept of revenue-ordered assortments that
underlies the optimal algorithm for the assortment problem under the
MNL. The key idea in our algorithm is to consider an assortment built
from the union of two sets of products: A revenue-ordered assortment
from the first level and another revenue-ordered assortment from the
second level. Several structural properties of optimal assortments
under the SML are also presented.

\subsection{Relevant Literature}
\label{relevant_literature}
The heuristic of revenue-ordered assortments,
consists in evaluating the expected revenue of all the assortments that
can be constructed as follows: fix threshold revenue $\rho$ and then
select all the products with revenue of at least $\rho$.
This strategy is appealing because it can be applied to assortment
problems for any discrete choice model. In addition, it
only needs to evaluate as many assortments as there are different
revenues among products. In a seminal result,
\cite{talluri2004revenue} showed that, under the MNL model, the
optimal assortment is revenue-ordered. This result does not hold for
all assortment problems however. For example, revenue-ordered
assortments are not necessarily optimal under the MNL model with
capacity constraints
\citep{rusmevichientong2010dynamic}. Nevertheless, in another seminal
result, \cite{rusmevichientong2010dynamic} showed that the assortment
problem can still be solved optimally in polynomial time under such
setting.

\cite{RobustMNL2012} considered a model where customers make choices
following an MNL model, but the parameters of this model belong to a
compact uncertainty set, i.e., they are not fully determined. The firm
wants to be protected against the worst-case scenario and the problem
is to find an optimal assortment under these uncertainty
conditions. Surprisingly, when there is no capacity constraint, the
revenue-ordered strategy is optimal in this setting as well.

There are also studies on how to solve the assortment problem when
customers follow a mixed multinomial logit
model. \cite{bront2009column} showed that this problem is NP-hard in the
strong sense using a reduction from the \textit{minimum vertex cover problem}
\citep{Garey1979}. \cite{MendezDiaz2014} proposed a
branch-and-cut algorithm to solve the optimal assortment under the
Mixed-Logit Model. An algorithm to obtain an upper bound of the
revenue of an optimal assortment solution under this choice model was
proposed by \cite{FeldmanBounding}. \cite{rusmevichientong2014} showed
that the problem remains NP-hard even when there are only two
customers classes.
			
Another model that attracted researchers attention is the
\textit{nested logit model} \citep{williams1977formation}.
Under the nested logit model, products are partitioned into nests,
and the selection process for a customer goes by first selecting a nest,
and then a product within the selectednest.
It also has a dissimilarity parameter associated with each nest
that serves the purpose of magnifying or dampening the total
preference weight of the nest. For the two-level nested logit model,
\cite{davis2014assortment} studied the assortment problem and showed
that, when the dissimilarity parameters are bounded by $1$ and the
no-purchase option is contained on a nest of its own, an optimal
assortment can be found in polynomial time; If any of these two
conditions is relaxed, the resulting problem becomes NP-hard, using a
reduction from the \textit{partition problem} \citep{Garey1979}. The
polynomial-time solution was further extended by
\cite{gallego2014nested}, who showed that, even if there is a capacity
constraint per nest, the problem remains polynomial-time solvable.
\cite{Li2015nested} extended this result to a d-level nested logit
model (both results under the same assumptions over the dissimilarity
parameters and the no-purchase option). \cite{Jagabathula} proposed a
local-search heuristic for the assortment problem under an arbitrary
discrete choice model. This heuristic is optimal in the case of the
MNL, even with a capacity constraint.

\cite{wang2017impact} has studied the assortment optimization in a context in which consumer
search costs are non-negligible. The authors showed that the strategy of revenue-ordered assortments is not optimal.
Another interesting model sharing similar choice probabilities
to those of the PALM, is the one proposed in
\cite{manzini2014stochastic} which is based on consider first and choose second process.
\citet{echenique2018} showed that the PALM and the model by Manzini and Mariotti are in fact disjoint.
The assortment optimization problem under the Manzini and Mariotti model has been recently
studied by \citet{gallego2017attention}, where they show that revenue-ordered assortments strategy is optimal.
Another choice model studied is the negative exponential distribution (NED) model \citep{daganzo1979multinomial},
also known as the Exponomial model \citep{exponomial2016} in which customer utilities follow negatively skewed distribution.
\citet{exponomial2016} proved that when prices are exogenous, the optimal assortment might not be revenue-ordered assortment,
because a product can be skipped in favour of a lower-priced one
depending on the utilities. This last result differs from what happens under the MNL and the Nested Logit Model (within each nest).
Another recently proposed extension to the multinomial logit model, is the General Luce Model (GLM) \citep{Gen_Luce}.
The GLM also generalizes the MNL and falls outside the RUM class. In the GLM, each product has an
intrinsic utility and the choice probability depends upon a dominance relationship between the products.
Given an assortment $S$, consumers first discard all
dominated products in $S$ and then select a product from the remaining
ones using the standard MNL model. \citet{2017GMNLArxiv_unpublished} studied the assortment optimization problem under the GLM.
The authors showed that revenue ordered assortments are not optimal, and proved that the problem can still be solved in polynomial time.

Recently, \cite{BerbegliaRev} studied how well revenue-ordered assortments
approximate the optimal revenue for a large class of choice models,
namely all choice models that satisfy regularity. They provide three types of revenue guarantees
that are exactly tight even for the special case of the RUM family.
In the last few years, there has been progress in studying the assortment problem
in choice models that incorporate visibility or position biases.
In these models, the likelihood of selecting an alternative not only depends on the offer set,
but also in the specific positions at which each product
is displayed \citep{abeliuk2015assortment, aouad2015display, davis2013assortment, gallego2016approximation}.

We mentioned that PALM and SML are able to accommodate many effects that can't be
explained by models the RUM class. We briefly describe each one of them in the
following paragraphs.

The attraction effect stipulates that, under certain conditions,
adding a product to an existing assortment can increase the
probability of choosing a product in the original assortment. We
briefly describe two experiments of this effect. \cite{Simonson1992}
considered a choice among three microwaves $x, y,$ and $z$. Microwave
$y$ is a Panasonic oven, perceived as a good quality product, and $z$
is a more expensive version of $y$. Product $x$ is an Emerson
microwave oven, perceived as a lower quality product. The authors
asked a set of 60 individuals ($N=60$) to choose between $x$ and $y$;
they also asked another set of 60 participants ($N=60$) to choose among
$x, y,$ and $z$. They found out that the probability of choosing $y$
increases when product $z$ is shown. This is a direct violation of
regularity, which states that the probability of choosing a product
does not increase when the choice set is enlarged, as described by
\cite{McCausland2013}. Another demonstration of the attraction effect
was carried by \cite{Doyle1999} who analyzed the choice behaviour of
two sets of participants ($N=70$ and $N=82$) inside a grocery store in
the UK varying the choice set of baked beans. To the first group, they
showed two types of baked beans: Heinz baked beans and a local (and
cheaper) brand called Spar. Under this setting, the Spar beans was
chosen $19\%$ of the time. To the second group, the authors introduced
a third option: a more expensive version of the local brand
Spar. After adding this new option, the cheap Spar baked beans was
chosen $33\%$ of the time. It is worth highlighting that the choice
behaviour in these two experiments cannot be explained by a
Multinomial Logit Model, nor can it be explained by any choice model based on random utility.

The compromise effect \citep{Simonson1992} captures the fact that
individuals are averse to extremes, which helps products that
represent a ``compromise'' over more extreme options (either in price,
familiarity, quality, ...). As a result, adding extreme options
sometimes leads to a positive effect for compromise products, whose
probabilities of being selected increase in relative terms compared to
other products. This phenomenon violates again the IIA axiom of
Luce's model and the regularity axiom satisfied by all random utility models \citep{BerbegliaRev}.
Again, the compromise effect can be captured with the PALM.

The similarity effect is discussed in \cite{Tversky1972elim},
elaborating on an example presented in \cite{Debreu1960}: Consider $x$
and $z$ to represent two recordings of the same Beethoven symphony and
$y$ to be a suite by Debussy. The intuition behind the effect is that
$x$ and $z$ jointly compete against $y$, rather than being separate
individual alternatives. As a result, the ratio between the
probability of choosing $x$ and the probability of choosing $y$ when
the customer is shown the set $\left\{x,y\right\}$ is larger than the
same ratio when the customer is shown the set $\left\{x,y,z\right\}$.
Intuitively, $z$ takes a market share of product $x$, rather than a
market share of product $y$.

Finally, the choice overload effect occurs when the probability of making
a purchase decreases when the assortment of available products is enlarged.
To our knowledge, the first paper that shows the empirical existence of choice
overload is written \citet{iyengar2000choice}.
In their experimental setup, customers are offered jams from a
tasting booth displaying either $6$ (limited selection)
or 24 (extensive selection) different flavours.
All customers were given a discount coupon for making a purchase of one of the offered jams.
Surprisingly, $30\%$ of the customers
offered the \textit{limited selection} used the coupon, while only $3\%$
of customers offered the \textit{extensive selection} condition used the coupon.
Another studies of choice overload are in 401(k) plans \cite{iyengar2004much},
chocolates \citep{chernev2003more}, consumer electronics \citep{chernev2003product} and  pens \citep{shah2007buying}.
For a more in depth discussion of this effect the reader is referred to \cite{schwartz2004}.
Readers are also referred to \cite{chernev2015choice} for a review and meta-analysis on this topic.

\section{Problem Formulation}
\label{PF}

This section presents the sequential multinomial logit model
considered in this paper and its associated assortment optimization
problem. Let $X$ be the set of all products and $x_0$ be the no-choice
option ($x_0 \notin X)$. Following \cite{echenique2018}, each product $x \in X$
is associated with an intrinsic utility $u(i)> 0$ and a perception
priority level $l(x) \in \{1,2\}$. The idea is that customers perceive
products sequentially, first those with priority 1 and then those with
priority 2. This perception priority order could represent differences
in familiarity, degree of attractiveness or salience of different products, or even in the
way the products are presented. Let $X_i=\{x \in X: l(x)=i\}$ be the
set of all products belonging to level $i=1,2$. Given an assortment $S
\subseteq X$, we write $S=S_1 \uplus S_2$ with $S_1 \subseteq X_1$ and
$S_2 \subseteq X_2$ to denote the fact that $S$ is a partition
consisting of two subsets $S_1$ and $S_2$.

The \emph{Sequential Multinomial Logit Model} (\textbf{SML}) is a
discrete choice model where the probability $\rho(x,S)$ of choosing a
product $x$ in an assortment $S$ is given by:
\[
\rho(x,S)= \left\{
\begin{array}{ll}
\frac{u(x)}{\sum_{y \in S}u(y)+u_0} & \text{if } x\in S_1, \\
\left[1-\frac{\sum_{z \in S_1}u(z)}{\sum_{y \in S}u(y)+u_0}\right]\cdot\frac{u(x)}{\sum_{y \in S}u(y)+u_0} & \text{if } x\in S_2.
\end{array}
\right.
%	\left[1-\frac{\sum_{z \in S_1}u(z)}{\sum_{y \in S}u(y)+u_0}\right]\cdot\left[1-\frac{\sum_{z \in S_2}u(z)}{\sum_{y \in S}u(y)+u_0}\right] & \text{if } x=x_0\text{, i.e. the outside option.}
%\end{cases}
\]
\noindent
where $u_0$ denotes the intrinsic utility of the no-choice option,
which has a probability
\[
\rho(x_0,S)=1-\sum_{i \in S}\rho(i,S)
\]
of being chosen.

 Observe that the probability of
choosing a product $x \in S_1$ (which implies that $l(x)=1$ and $x \in
S$) is given by the standard MNL formula, whereas the probability of
choosing a product $y$ that belongs to the second level is given by
the probability of not choosing any product belonging to the level 1
multiplied by the probability of selecting product $y$ according the
MNL again.  Note that, if all the offered products belong to the same
level, this model is equivalent to the classical MNL model. The SML corresponds
to PALM restricted to two levels. We provide a full
description of PALM in Appendix \ref{App:PALM}.

Let $r:X\cup \{x_0\} \rightarrow \mathbb{R^+}$ be the revenue function
which assigns a per-unit revenue to each product and let
$r(x_0)=0$.  We use $R(S)$ to denote the \emph{expected revenue}
of an assortment $S$, i.e.,
	
\begin{equation}\label{Expected_Revenue}
R(S)=\sum_{x \in S}\rho (x,S)\cdot r(x).
\end{equation}

\noindent
The assortment optimization problem under the SML consists in finding an assortment $S^*$ that maximizes $R$, i.e.,
\begin{equation}
\label{Assortment_Opt-arg}
S^*=\argmax\limits_{S\subseteq X}R(S).
\end{equation}
We use $R^*$ to denote the maximum expected revenue, i.e.,
\begin{equation}
\label{Assortment_Opt}
R^*=\max\limits_{S\subseteq X}R(S).
\end{equation}

\noindent
Without loss of generality, we assume that $u(i)> 0$ in the rest of this
paper. We use $x_{ij}$ to denote the $j^{th}$ product of the $i^{th}$
level ($i=1,2$), and $m_i$ to denote the number of products in level $i$.
%It isalways possible to add products with zero utility to satisfy this
%assumption.
 Also, we assume that the products in each level are
indexed in a decreasing order by revenue (breaking ties arbitrarily), i.e.,
\[
\forall i \in \left\{1,2\right\}, r(x_{i1})\geq r(x_{i2})\geq\ldots\geq r(x_{im_{i}}).
\]

\noindent
It is useful to illustrate how the SML allows for violations of the
regularity condition, a property first observed by \cite{echenique2018}. Our
first example captures the attraction effect presented earlier.

\begin{exmp}[Attraction Effect in the SML]
  Consider a retail store that offers different brands of chocolate.
  Suppose that there is a well-known brand A and the brand B owned by
  the retail store. There is one chocolate bar $a_1$ from brand A and
  there are two chocolate bars $b_1$ and $b_2$ from Brand B, with
  $b_2$ being a more expensive version of $b_1$. When shown the
  assortment $\{a_1,b_1\}$, 71\% of the clients purchase $a_1$ and
  8.2\% buy $b_1$. When shown the assortment $\{a_1,b_1,b_2\}$,
  customers select $a_1$ 49.8\% of the time and, surprisingly,
  bar $b_1$ increases its market share to about 10\%, while
  bar $b_2$ accounts for 15\% of the market. The introduction of $b_2$
  to the assortment increases the purchasing probability of $b_1$,
  violating regularity. The numerical example can be explained with
  the SML as follows: Consider $A=\left\{a_1\right\}$,
  $B=\left\{b_1,b_2\right\}$ and $X=A \uplus B$. With
  $u(a_1)=100,u(b_1)=40,u(b_2)=60$, and $u_0=1$ as the utility of the
  outside option, we have:
  \begin{equation*}
	\rho(b_1,\left\{a_1,b_1\right\})=\frac{40}{141}\cdot\left[1-\frac{100}{141}\right]\approx 8.2\%.
  \end{equation*}
  and
  \begin{equation*}
  \rho(b_1,\left\{a_1,b_1,b_2\right\})=\frac{40}{201}\cdot\left[1-\frac{100}{201}\right]\approx 10\%.
  \end{equation*}	

\noindent		
 Hence $\rho(b_1,\left\{a_1,b_1\right\})<\rho(b_1,\left\{a_1,b_1,b_2\right\})$ which contradicts regularity.
\end{exmp}

Our second example shows that the SML can capture the so-called
\textit{paradox of choice} or \textit{choice overload effect} (e.g., \cite{schwartz2004,chernev2015choice}): The overall
purchasing probability may decrease when the assortment is enlarged.
Once again, this effect cannot be explained by any random utility
model and it is sometimes called the effect of ``too much choice''.

\begin{exmp}[Paradox of Choice in the SML]
  Let $X_1=\left\{x_{11}\right\}$, $X_2=\left\{x_{21},x_{22}\right\}$,
  $X=X_1\uplus X_2$, $u(x_{11})=10$ ,$u(x_{21})=1$
  ,$u(x_{22})=10$, and $u_0=1$. We have
\begin{align*}
\rho(x_0,\left\{x_{11},x_{21}\right\})&=1-\rho(x_{11},\left\{x_{11},x_{21}\right\})-\rho(x_{21},\left\{x_{11},x_{21}\right\})\\
&=1-\frac{10}{12}-\left(1-\frac{10}{12}\right)\cdot\frac{1}{12}=0.152\overline{7},
\end{align*}
and
\begin{align*}
\rho(x_0,\left\{x_{11},x_{21},x_{22}\right\})&=1-\rho(x_{11},\left\{x_{11},x_{21},x_{22}\right\})-\rho(x_{21},\left\{x_{11},x_{21},x_{22}\right\})-\rho(x_{22},\left\{x_{11},x_{21},x_{22}\right\})\\
&=1-\frac{10}{22}-\left(1-\frac{10}{22}\right)\cdot\frac{1}{22}-\left(1-\frac{10}{22}\right)\cdot\frac{10}{22}=0.\overline{27}.
\end{align*}
Hence $\rho(x_0,\left\{x_{11},x_{21}\right\})<\rho(x_0,\left\{x_{11},x_{21},x_{22}\right\})$.
\end{exmp}

In the following section we first focus in finding properties that any optimal solution of the assortment problem for the SML must satisfy. Then, in Section \ref{sec:optimality}, we use those properties to show the optimality of an extension of the classical revenue-ordered assortments which we called revenue-ordered assortments by level.

\section{Properties of Optimal Assortments}
\label{sec:properties-2-level}

In this section we derive properties of the optimal solutions to the assortment problem under the SML. These properties are extensively used in the proof of our main result (Theorem \ref{main_thm}) in Section \ref{sec:optimality}.  We establish bounds on the following: any product offered on any optimal solution, and also the assortments considered on an optimal solution on each level.  We assume a set of products $X = X_1 \uplus X_2$
and use the following notations
	
\begin{equation}
\label{defs}
U(S)=\sum_{x\in S}u(x),\quad \alpha(S)=\frac{\sum_{x\in S} u(x)r(x)}{\sum_{x\in S}u(x)}=\frac{\sum_{x\in S} u(x)r(x)}{U(S)}\quad \text{ and }\quad \lambda(Z,S)=\frac{U(Z)}{U(S)+u_0}
\end{equation}

\noindent where $Z \subseteq S$ and $Z,S \subseteq X$. Note that
$\alpha(S)$ is the usual MNL formula for the revenue and, when $S=\left\{x\right\}$ for some
$x\in X$, $\alpha(S)=r(x)$. With these notations, the revenue of an
assortment $S=S_1\uplus S_2$ is
	
\begin{align}
\label{rev_short}
R(S)&=\frac{\alpha(S_1)U(S_1)}{U(S_1)+U(S_2)+u_0}+\frac{\alpha(S_2)U(S_2)}{U(S_1)+U(S_2)+u_0}\cdot\left(1-\frac{U(S_1)}{U(S_1)+U(S_2)+u_0}\right)\nonumber\\
&=\frac{\alpha(S_1)U(S_1)}{U(S)+u_0}+\frac{\alpha(S_2)U(S_2)}{U(S)+u_0}\cdot\left(1-\frac{U(S_1)}{U(S)+u_0}\right).
\end{align}

\noindent
The following proposition is useful to divide a set into disjoint sets, which can then be analyzed separately.
	
\begin{proposition}
\label{decomposition}
Let $S\subseteq X$ and $S=H\cup T$ with $H\cap T=\emptyset$. We have
		
\begin{equation}\label{decomp}
\alpha(S)=\frac{\alpha(H)U(H)+\alpha(T)U(T)}{U(S)}.
\end{equation}
\end{proposition}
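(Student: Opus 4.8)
The plan is to prove this by directly expanding the definition of $\alpha$ and exploiting the additivity of the numerator sum over the disjoint decomposition $S = H \cup T$. The key observation is that $\alpha$ is defined as a ratio, but its numerator $\sum_{x \in S} u(x) r(x)$ is additive over disjoint sets while its denominator is exactly $U(S)$. So rather than manipulating the ratios directly, I would first clear denominators and work with the quantity $\alpha(S)\,U(S)$, which is simply $\sum_{x \in S} u(x) r(x)$.

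Concretely, I would proceed as follows. First, from the definition in~\eqref{defs} I would note the identity $\alpha(S)\,U(S) = \sum_{x \in S} u(x) r(x)$, and likewise $\alpha(H)\,U(H) = \sum_{x \in H} u(x) r(x)$ and $\alpha(T)\,U(T) = \sum_{x \in T} u(x) r(x)$. Next, since $S = H \cup T$ with $H \cap T = \emptyset$, every product of $S$ lies in exactly one of $H$ or $T$, so the sum over $S$ splits cleanly:
\[
\sum_{x \in S} u(x) r(x) = \sum_{x \in H} u(x) r(x) + \sum_{x \in T} u(x) r(x).
\]
Substituting the three identities above into this equation yields $\alpha(S)\,U(S) = \alpha(H)\,U(H) + \alpha(T)\,U(T)$. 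Finally, dividing both sides by $U(S)$ (which is positive since $u(\cdot) > 0$, so no division-by-zero issue arises unless $S = \emptyset$, a degenerate case one may exclude or handle separately) gives exactly~\eqref{decomp}.

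I do not anticipate any genuine obstacle here: the statement is a bookkeeping identity, and the only point requiring care is the use of disjointness ($H \cap T = \emptyset$) to guarantee that splitting the sum over $S$ neither omits nor double-counts any term. The same decomposition applied to $U(S) = U(H) + U(T)$ confirms consistency of the denominators, though it is not strictly needed for the algebra. The entire argument is a one-line expansion once one chooses to work with the unnormalized quantity $\alpha(\cdot)\,U(\cdot)$ rather than with $\alpha(\cdot)$ itself.
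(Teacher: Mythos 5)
Your proof is correct and takes essentially the same approach as the paper: both expand the definition of $\alpha(\cdot)$, use the disjointness of $H$ and $T$ to split the sum $\sum_{x\in S} u(x)r(x)$, and cancel the $U(H)$, $U(T)$ factors. The only cosmetic difference is that you clear denominators and work with $\alpha(\cdot)\,U(\cdot)$ before dividing by $U(S)$ at the end, whereas the paper manipulates the ratio directly from right-hand side to left-hand side.
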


\begin{proof}
\[
\begin{aligned}
\frac{\alpha(H)U(H)+\alpha(T)U(T)}{U(S)}&=\frac{\frac{\sum_{x\in H}r(x)u(x)}{U(H)}\cdot U(H)+\frac{\sum_{x\in T}r(x)u(x)}{U(T)}\cdot U(T)}{U(S)}&& \text{/using definition of $\alpha(\cdot)$}\\
&=\frac{\sum_{x\in H}r(x)u(x)+\sum_{x\in T}r(x)u(x)}{U(S)}&& \text{/cancelling $U(H)$ and $U(T)$}\\
&=\frac{\sum_{x\in S}r(x)u(x)}{U(S)}&& \text{/using that $H\cup T=S$}\\
&=\alpha(S). &&\text{/definition of $\alpha(S)$}
\end{aligned}
\]
\end{proof}

\noindent
The next proposition is useful to bound expected revenues.

\begin{proposition}\label{bound_alpha}
Let $S_1,S_2\subseteq X$. If $\forall x\in S_1,\forall y\in S_2, r(x) \geq r(y)$, then $\alpha(S_1)\geq \alpha(S_2)$.
\end{proposition}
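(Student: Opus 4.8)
The plan is to observe that $\alpha(S)$ is precisely a weighted average of the revenues $r(x)$ over $x \in S$, with strictly positive weights $u(x)$. The hypothesis asserts that every revenue appearing in $S_1$ dominates every revenue appearing in $S_2$, so intuitively the weighted average over $S_1$ cannot fall below the weighted average over $S_2$. I would make this rigorous by reducing the desired inequality $\alpha(S_1) \geq \alpha(S_2)$ to the nonnegativity of a single sum.

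Concretely, I would first take $S_1, S_2 \neq \emptyset$ (otherwise $\alpha$ is undefined), so that $U(S_1), U(S_2) > 0$ by the standing assumption $u(\cdot) > 0$. Then I would clear denominators: since both $U(S_1)$ and $U(S_2)$ are positive, $\alpha(S_1) \geq \alpha(S_2)$ is equivalent to
\[
U(S_2)\sum_{x \in S_1} u(x) r(x) \;\geq\; U(S_1)\sum_{y \in S_2} u(y) r(y).
\]
Expanding $U(S_2) = \sum_{y \in S_2} u(y)$ and $U(S_1) = \sum_{x \in S_1} u(x)$ and writing both sides as double sums over $S_1 \times S_2$, this rearranges to
\[
\sum_{x \in S_1}\sum_{y \in S_2} u(x)\,u(y)\,\bigl(r(x) - r(y)\bigr) \;\geq\; 0.
\]

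The concluding step is to note that each summand is nonnegative: $u(x) > 0$ and $u(y) > 0$ by assumption, while $r(x) - r(y) \geq 0$ is exactly the hypothesis $r(x) \geq r(y)$ for all $x \in S_1$, $y \in S_2$. Hence the entire sum is nonnegative and the claim follows. An alternative, slightly more conceptual route would bound $\alpha(S_1) \geq \min_{x \in S_1} r(x)$ and $\alpha(S_2) \leq \max_{y \in S_2} r(y)$ (a weighted average always lies between its extreme values) and then chain these through $\min_{x \in S_1} r(x) \geq \max_{y \in S_2} r(y)$, which is again immediate from the hypothesis. I do not anticipate a genuine obstacle: the only point requiring care is the positivity of the weights and denominators, guaranteed by $u(\cdot) > 0$, together with the implicit nonemptiness of $S_1$ and $S_2$ needed for $\alpha$ to be well defined.
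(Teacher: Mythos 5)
Your proposal is correct, but your primary argument takes a genuinely different route from the paper. The paper's proof is exactly the ``alternative, slightly more conceptual route'' you sketch at the end: it chains
\[
\alpha(S_1)\;\geq\;\min_{x\in S_1}r(x)\;\geq\;\max_{y\in S_2}r(y)\;\geq\;\alpha(S_2),
\]
using only the fact that a weighted average with positive weights lies between its extreme values, together with the hypothesis in the form $\min_{S_1} r \geq \max_{S_2} r$. Your main argument instead clears denominators and reduces the inequality to the nonnegativity of the double sum $\sum_{x\in S_1}\sum_{y\in S_2}u(x)u(y)\bigl(r(x)-r(y)\bigr)$, each summand of which is nonnegative term by term. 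Both proofs are valid and of comparable length; the paper's version is slightly more economical (three one-line inequalities, no algebraic expansion), while yours makes the pairwise comparison structure fully explicit and avoids introducing $\min$ and $\max$ at all, which some readers may find more mechanical to verify. One point where you are actually more careful than the paper: you flag that $S_1$ and $S_2$ must be nonempty for $\alpha$ to be defined (the paper leaves this implicit, and elsewhere adopts the convention $\alpha(\emptyset)=0$, under which the statement with $S_1=\emptyset$ would be vacuously problematic since the hypothesis holds but $\alpha(S_2)>0$). This is a harmless but worthwhile clarification, since the proposition is only ever invoked on nonempty sets.
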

\begin{proof}
If $\forall x\in S_1,\forall y\in S_2: r(x)\geq r(y)$, then $\min_{x\in S_1}r(x)\geq \max_{y\in S_2}r(y)$. We have
\begin{equation*}
\alpha(S_1)=\frac{\sum_{x\in S_1} u(x)r(x)}{\sum_{x\in S_1}u(x)}\geq \min_{x\in S_1}r(x)\cdot\underbrace{\frac{\sum_{x\in S_1} u(x)}{\sum_{x\in S_1}u(x)}}_{1}\geq \max_{y\in S_2}r(y)\underbrace{\frac{\sum_{y\in S_2} u(y)}{\sum_{y\in S_2}u(y)}}_{1}\geq \alpha(S_2).
\end{equation*}
\end{proof}

\noindent
The following proposition bounds the MNL revenue of the products in the
first level. We use $S_i^*=S^*\cap X_i$ to denote the products in
level $i$ in the optimal assortment, i.e., $S^*=S_1^*\uplus S_2^*$.
	
\begin{proposition}[Bounding Level 1]
\label{R_bounds_1}
$\alpha(S_1^*)\geq R^*$.
\end{proposition}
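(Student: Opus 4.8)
The plan is to argue by contradiction using the optimality of $S^*$, taking as the competing assortment the one obtained by deleting the entire first level. Assume $S_1^*\neq\emptyset$ (otherwise $\alpha(S_1^*)$ is undefined and the statement is read vacuously), and abbreviate $a=U(S_1^*)$, $b=U(S_2^*)$, $\alpha_1=\alpha(S_1^*)$, $\alpha_2=\alpha(S_2^*)$ and $D=U(S^*)+u_0=a+b+u_0$. Using the closed form \eqref{rev_short}, the optimal revenue is $R^*=\frac{\alpha_1 a}{D}+\frac{\alpha_2 b(b+u_0)}{D^2}$, since $1-\frac{U(S_1^*)}{D}=\frac{b+u_0}{D}$. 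Suppose, for contradiction, that $\alpha_1<R^*$.

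First I would isolate the level-2 contribution. Because $\alpha_1<R^*$ and $a/D>0$, replacing $\alpha_1$ by the strictly larger quantity $R^*$ in the first summand gives $R^*<R^*\frac{a}{D}+\frac{\alpha_2 b(b+u_0)}{D^2}$, that is $R^*\,\frac{b+u_0}{D}<\frac{\alpha_2 b(b+u_0)}{D^2}$; dividing by the positive factor $\frac{b+u_0}{D}$ yields the clean bound $R^*<\frac{\alpha_2 b}{D}$. Note this already forces $\alpha_2 b>0$, since $R^*\geq 0$.

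Next I would exhibit a strictly better assortment. Offering only the second level, i.e. $\emptyset\uplus S_2^*$, makes the SML reduce to plain MNL on $S_2^*$, so by \eqref{rev_short} its revenue is $R(\emptyset\uplus S_2^*)=\frac{\alpha_2 b}{b+u_0}$. Since $a>0$ we have $D=a+b+u_0>b+u_0$, hence $\frac{\alpha_2 b}{D}<\frac{\alpha_2 b}{b+u_0}$. Chaining the two inequalities gives $R^*<\frac{\alpha_2 b}{D}<R(\emptyset\uplus S_2^*)$, contradicting the definition of $R^*$ as the maximum expected revenue. Therefore $\alpha(S_1^*)\geq R^*$.

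The conceptual content—and the main obstacle—is choosing the right comparison assortment rather than the algebra, which is short. The natural move is to drop the whole first level: if its MNL revenue $\alpha_1$ were below the optimum, the revenue lost by removing level $1$ (controlled by $\alpha_1$) is more than offset by the gain from exposing the level-2 products to the full MNL denominator $b+u_0$ instead of the damped factor $(b+u_0)/D$. The boundary cases are easily disposed of: if $S_2^*=\emptyset$ the model is pure MNL on $S_1^*$ and $\alpha_1>R^*$ holds immediately, while $\alpha_2=0$ is excluded by the bound $R^*<\alpha_2 b/D$ together with $R^*\geq 0$.
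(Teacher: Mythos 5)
Your proof is correct and is essentially the paper's own argument: both start from the closed form \eqref{rev_short}, rewrite the level-2 term so that $R^*$ is expressed through $\alpha(S_1^*)$ and $R(S_2^*)$ (your $\frac{\alpha_2 b(b+u_0)}{D^2}$ is exactly the paper's $R(S_2^*)\left(1-\lambda(S_1^*,S^*)\right)^2$), and both conclude from the optimality comparison $R(S_2^*)\leq R^*$ against the same competing assortment $S_2^*$. The only difference is presentational — the paper reads the bound off a convex-combination identity while you argue by contradiction and make the edge cases ($S_1^*=\emptyset$, $\alpha_2 b=0$) explicit.
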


\begin{proof}
The proof shows that the optimal revenue is a convex combination of
$\alpha(S_1^*)$ and another term by using Equation \eqref{rev_short}
and multiplying/dividing the revenue associated with the second
level by $\frac{U(S_2^*)+u_0}{U(S_2^*)+u_0}$. We have
\begin{align*}
R^*=&\frac{\alpha(S_1^*)U(S_1^*)}{U(S^*)+u_0}+\frac{\alpha(S_2^*)U(S_2^*)}{U(S^*)+u_0}\cdot\left(1-\frac{U(S_1^*)}{U(S^*)+u_0}\right)\\
=&\frac{\alpha(S_1^*)U(S_1^*)}{U(S^*)+u_0}+\frac{\alpha(S_2^*)U(S_2^*)}{U(S_2^*)+u_0}\cdot\underbrace{\frac{U(S_2^*)+u_0}{U(S^*)+u_0}}_{\left(1-\lambda(S_1^*,S^*)\right)\in (0,1)}\cdot\left(1-\frac{U(S_1^*)}{U(S^*)+u_0}\right)\\
=&\alpha(S_1^*)\lambda(S_1^*,S^*)+R(S_2^*)\left(1-\lambda(S_1^*,S^*)\right)^2.
\end{align*}
$R^*$ is a convex combination of $\alpha(S_1^*)$ and $R(S_2^*)(1-\lambda(S_1^*,S^*))$. By optimality of $R^*$, $R(S_2^*)\leq R^*$ and hence $\alpha(S_1^*)\geq R^*$.
\end{proof}
	
\noindent
We now prove a stronger lower bound for the value $\alpha(S^*_2)$ of the second level.
	
\begin{proposition}(Bounding Level 2)
\label{R2_bound}
$\alpha(S_2^*)\geq \frac{R^*}{1-\lambda(S_1^*,S^*)}$.	
\end{proposition}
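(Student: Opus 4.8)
The plan is to exploit the optimality of $S^*$ by comparing it against the assortment $S_1^*$ consisting of the first level alone, and then to convert the resulting revenue inequality into the claimed bound using the same convex-combination decomposition that appears in the proof of Proposition~\ref{R_bounds_1}. Since $S_1^*\subseteq X$ is a feasible assortment all of whose products lie in level $1$, offering it reduces to the plain MNL, so $R(S_1^*)=\frac{\alpha(S_1^*)U(S_1^*)}{U(S_1^*)+u_0}$, and optimality gives $R^*\ge R(S_1^*)$. The whole proof then rests on showing that this single inequality is algebraically equivalent to $\alpha(S_2^*)\ge R^*/\bigl(1-\lambda(S_1^*,S^*)\bigr)$.

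First I would record the two identities that make the translation work. Writing $\lambda(S_1^*,S^*)=U(S_1^*)/(U(S^*)+u_0)$ and $\lambda(S_2^*,S^*)=U(S_2^*)/(U(S^*)+u_0)$, a direct computation gives $1-\lambda(S_1^*,S^*)=(U(S_2^*)+u_0)/(U(S^*)+u_0)$, whence
\[
R(S_2^*)\bigl(1-\lambda(S_1^*,S^*)\bigr)=\frac{\alpha(S_2^*)U(S_2^*)}{U(S_2^*)+u_0}\cdot\frac{U(S_2^*)+u_0}{U(S^*)+u_0}=\alpha(S_2^*)\,\lambda(S_2^*,S^*).
\]
Substituting this into the decomposition $R^*=\alpha(S_1^*)\lambda(S_1^*,S^*)+R(S_2^*)\bigl(1-\lambda(S_1^*,S^*)\bigr)^2$ obtained in Proposition~\ref{R_bounds_1} turns it into the clean form $R^*=\alpha(S_1^*)\lambda(S_1^*,S^*)+\alpha(S_2^*)\lambda(S_2^*,S^*)\bigl(1-\lambda(S_1^*,S^*)\bigr)$.

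Next I would rewrite the optimality inequality $R^*\ge R(S_1^*)$ in the same variables. Using $U(S_1^*)+u_0=(U(S^*)+u_0)\bigl(1-\lambda(S_2^*,S^*)\bigr)$ one checks that $R(S_1^*)=\alpha(S_1^*)\lambda(S_1^*,S^*)/\bigl(1-\lambda(S_2^*,S^*)\bigr)$, so the inequality becomes $\alpha(S_1^*)\lambda(S_1^*,S^*)\le R^*\bigl(1-\lambda(S_2^*,S^*)\bigr)$. Finally, isolating the second-level term in the clean form above and applying this last bound yields
\[
\alpha(S_2^*)\,\lambda(S_2^*,S^*)\bigl(1-\lambda(S_1^*,S^*)\bigr)=R^*-\alpha(S_1^*)\lambda(S_1^*,S^*)\ge R^*\,\lambda(S_2^*,S^*),
\]
and dividing by $\lambda(S_2^*,S^*)>0$ gives exactly $\alpha(S_2^*)\bigl(1-\lambda(S_1^*,S^*)\bigr)\ge R^*$, which is the claim.

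The substantive step — the only place where something other than algebra happens — is choosing the right competitor: dropping the entire second level (rather than, say, the first) is what produces an inequality on $\alpha(S_1^*)$ that is tight enough to close the argument. After that the work is purely symbolic manipulation driven by the two identities above. The one case needing separate attention is $S_2^*=\emptyset$, for which $\lambda(S_2^*,S^*)=0$ and $\alpha(S_2^*)$ is undefined; there the statement is vacuous and should simply be excluded, so the proposition is understood to concern an optimum whose second level is nonempty.
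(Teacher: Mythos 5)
Your proof is correct and follows essentially the same route as the paper's: both compare the optimum against the competitor assortment $S_1^*$ (dropping the second level), decompose $R^*$ via Equation \eqref{rev_short} into the identity $R^*=R(S_1^*)\bigl(1-\lambda(S_2^*,S^*)\bigr)+\alpha(S_2^*)\bigl(1-\lambda(S_1^*,S^*)\bigr)\lambda(S_2^*,S^*)$, and invoke $R(S_1^*)\le R^*$ to conclude. The paper packages this as a convex-combination argument while you unwind it into explicit substitutions (and you add the harmless remark that $S_2^*=\emptyset$ must be excluded), but the substance is identical.
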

\begin{proof}
The proof is similar to the one in Proposition \ref{R_bounds_1}.
\begin{align*}
R^*=&\frac{\alpha(S_1^*)U(S_1^*)}{U(S^*)+u_0}+\frac{\alpha(S_2^*)U(S_2^*)}{U(S^*)+u_0}\cdot\left(1-\frac{U(S_1^*)}{U(S^*)+u_0}\right)\\
=&\frac{\alpha(S_1^*)U(S_1^*)}{U(S_1^*)+u_0}\cdot\underbrace{\frac{U(S_1^*)+u_0}{U(S^*)+u_0}}_{1-\lambda(S_2^*,S^*)}+\alpha(S_2^*)\cdot\underbrace{\frac{U(S_2^*)}{U(S^*)+u_0}}_{\lambda(S_2^*,S^*)}\cdot\underbrace{\left(1-\frac{U(S_1^*)}{U(S^*)+u_0}\right)}_{1-\lambda(S_1^*,S^*)}\\
=&R(S_1^*)\cdot(1-\lambda(S_2^*,S^*)) + (\alpha(S_2^*)(1-\lambda(S_1^*,S^*)))\cdot\lambda(S_2^*,S^*).
\end{align*}
$R^*$ is a convex combination of and $R(S_1^*)$ and
$\alpha(S_2^*)(1-\lambda(S_1^*,S^*))$. By optimality of $R^*$,
$R(S_1^*)\leq R^*$ and $\alpha(S_2^*)\geq
\frac{R^*}{(1-\lambda(S_1^*,S^*))}$.
\end{proof}

\noindent
	
The following example shows that it not always the case that the inequality proved above holds if one considers the products in $S_2^*$ separately. That is, the inequality $r(x)\geq \frac{R^*}{1-\lambda(S_1^*,S^*)}$ for all $x \in S_2^*$ is not always true. 

\begin{exmp}\label{ex:lowerboundcounter}

	Let $X_1=\left\{x_{11}\right\}$,
	$X_2=\left\{x_{21},x_{22}\right\}$, and $X=X_1\uplus X_2$. Let
	the revenues be $r(x_{11})=10,r(x_{21})=9$, and $r(x_{22})=6$ and the
	utilities be $u(x_{11})=u(x_{21})=1, u(x_{22})=3$, and $u_0=1$.
	The expected revenue for all possible subsets are given by \\
	
	\begin{center}
		\begin{tabular}{@{} cc @{}}
			\toprule
			$S$ &        $R(S)$  \\
			\midrule
			$\{x_{11}\}$  & 5    \\	
			$\{x_{21}\}$ & 4.5   \\
			$\{x_{22}\}$  & 4.5    \\
			$\{x_{11}, x_{21}\}$ & $5.\overline{3}$    \\
			$\{x_{11}, x_{22}\}$ & 4.88   \\
			$\{x_{21}, x_{22}\}$  & 5.4    \\
			$\{x_{11},x_{21}, x_{22}\}$  & $5.41\overline{6}$   \\
			
			\bottomrule
		\end{tabular}
	\end{center}
	
	\noindent
	The optimal assortment is $S^*=\left\{x_{11},x_{21},x_{22}\right\}$
	with an expected revenue of $R^*=5.41\overline{6}$. By
	definition of $\lambda(\cdot)$, we have
	\[
	\lambda(S_1^*,S^*)=\frac{U(S_1^*)}{U(S^*)+u_0}=\frac{1}{6}=0.1\overline{6}.
	\]
	It follows that
	$r(x_{22})=6<\frac{R^*}{1-\lambda(S_1^*,S^*)}=\frac{5.41\overline{6}}{1-0.1\overline{6}}=6.488$, showing that the bound does not hold for product $x_{22}$.
\end{exmp}

However, the weaker bound holds for every product, and more generally,
we have the following proposition (a proof is provided in Appendix \ref{proofs}).

\begin{proposition}\label{global_bound}
	In every optimal assortment $S^*$, if $Z\subseteq S_i^*$ ($i=1,2$), then $\alpha(Z)\geq R^*$.
\end{proposition}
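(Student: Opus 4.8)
The plan is to prove the statement by a removal (exchange) argument. For nonempty $Z\subseteq S_i^*$ I will show that $R(S^*)$ is at most a convex combination of $\alpha(Z)$ and $R(S^*\setminus Z)$,
\[
R(S^*)\le \mu\,\alpha(Z)+(1-\mu)\,R(S^*\setminus Z),\qquad \mu\in(0,1).
\]
Since $S^*\setminus Z$ is itself a feasible assortment, optimality yields $R(S^*\setminus Z)\le R^*=R(S^*)$; substituting and cancelling the term $(1-\mu)R(S^*)$ leaves $\mu R(S^*)\le \mu\,\alpha(Z)$, i.e.\ $\alpha(Z)\ge R^*$. Throughout I abbreviate $V=U(S^*)+u_0$ and $W=U(S_2^*)+u_0$, and I use that a single-level assortment is scored by the plain MNL, so $R(S_2^*)=U(S_2^*)\alpha(S_2^*)/W$, together with Proposition~\ref{decomposition}, which makes $U(S)\alpha(S)$ additive over any partition of $S$.

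For the first level ($Z\subseteq S_1^*$) the computation is direct. Writing the revenue as $R(S)=U(S_1)\alpha(S_1)/V+(W^2/V^2)R(S_2)$ and noting that deleting $Z$ from level $1$ changes $V$ into $V'=V-U(Z)$ while leaving $W$ and $S_2^*$ untouched, I expand $U(S_1^*)\alpha(S_1^*)=U(Z)\alpha(Z)+U(S_1^*\setminus Z)\alpha(S_1^*\setminus Z)$ by Proposition~\ref{decomposition} and take $\mu=U(Z)/V$. A short calculation then gives
\[
R(S^*)-\mu\,\alpha(Z)-(1-\mu)R(S^*\setminus Z)=\frac{W^2 R(S_2^*)}{V}\Big(\tfrac1V-\tfrac1{V'}\Big)\le 0,
\]
because $V'<V$ and $R(S_2^*)\ge 0$, which is exactly the desired inequality.

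The second level ($Z\subseteq S_2^*$) is the crux, since deleting $Z$ also rescales the level-$1$ choice probabilities, so the cross term no longer vanishes. Here I first use the plain-MNL decomposition inside level $2$, namely $R(S_2^*)=\tfrac{U(Z)}{W}\alpha(Z)+\tfrac{W'}{W}R(S_2^*\setminus Z)$ with $W'=W-U(Z)$, and substitute it into $R(S^*)=U(S_1^*)\alpha(S_1^*)/V+(W^2/V^2)R(S_2^*)$; this exposes the coefficient $\mu=W\,U(Z)/V^2\in(0,1)$ of $\alpha(Z)$. Comparing the remainder with $(1-\mu)R(S^*\setminus Z)$ and clearing denominators, the inequality to be established collapses, after cancelling a common positive factor, to
\[
\alpha(S_1^*)\ \ge\ \frac{W'}{V'}\,R(S_2^*\setminus Z),\qquad V'=V-U(Z).
\]
This is where the earlier structural results enter: Proposition~\ref{R_bounds_1} gives $\alpha(S_1^*)\ge R^*$, while $S_2^*\setminus Z$ is a feasible single-level assortment so $R(S_2^*\setminus Z)\le R^*$ by optimality, and since $W'/V'\le 1$ the claim follows. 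The main obstacle is thus confined to this second case: handling the simultaneous change of both normalizations forces one to offset the ``slack'' created by the level-$1$ term against the ``deficit'' of the level-$2$ term, and closing that gap is precisely what requires feeding back the bound $\alpha(S_1^*)\ge R^*$ already established for the full first level.
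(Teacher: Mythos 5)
Your proof is correct. I checked both cases: for $Z\subseteq S_1^*$, with $\mu=U(Z)/V$ the difference $R(S^*)-\mu\,\alpha(Z)-(1-\mu)R(S^*\setminus Z)$ indeed equals $\frac{W^2R(S_2^*)}{V}\bigl(\frac{1}{V}-\frac{1}{V'}\bigr)\le 0$; for $Z\subseteq S_2^*$, with $\mu=W\,U(Z)/V^2$ the required inequality reduces, after dividing out the factor $U(Z)U(S_1^*)/(V^2V')$, to $\alpha(S_1^*)\ge \frac{W'}{V'}R(S_2^*\setminus Z)$, which follows from Proposition \ref{R_bounds_1}, optimality applied to the feasible assortment $S_2^*\setminus Z$, and $W'\le V'$. (When $S_1^*=\emptyset$ that common factor vanishes and your bound holds with equality, so the edge case where the division is illegitimate is harmless; your restriction to nonempty $Z$ is also the right reading of the statement.)

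Your route shares the paper's high-level strategy---a removal argument comparing $S^*$ with $S^*\setminus Z$ and exploiting optimality of sub-assortments---but the execution is genuinely different. The paper first establishes an exact identity (Lemma \ref{convex_lemma}, a long algebraic computation) writing $R(S^*)$ as a true convex combination of $R(S^*\setminus Z)$ and a corrected term $\Gamma_Z$, with weight $\lambda(Z,S^*)=U(Z)/V$ in both cases, and then argues by contradiction: assuming $\alpha(Z)<R^*$ it shows $\Gamma_Z<R^*$, whence $R^*<R^*$; its level-1 case additionally invokes Proposition \ref{R2_bound}. You replace the identity by a one-sided bound $R(S^*)\le\mu\,\alpha(Z)+(1-\mu)R(S^*\setminus Z)$, with a different weight $\mu=W\,U(Z)/V^2$ in the level-2 case, and conclude directly with no contradiction. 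What this buys: your level-1 case needs nothing beyond Proposition \ref{decomposition} and optimality (no Proposition \ref{R2_bound}), and you bypass the paper's technical lemma entirely, at the cost of feeding Proposition \ref{R_bounds_1} into the level-2 case where the paper instead reuses its contradiction hypothesis. What the paper's version buys is the exact expression for the slack you discard, though that extra information is not used elsewhere. Either proof could stand in for the other.
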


\proofatend
%\begin{proof}
The proof of this proposition relies on the following technical lemma.

\begin{lemma}
	\label{convex_lemma}
	Consider an assortment $S=S_1\uplus S_2$ and $Z\subseteq S_i$ for some $i=1,2$. $R(S)$ can be expressed in terms of the following convex combinations:
	\begin{itemize}
		\item If $Z\subseteq S_1$,

		\begin{align}
		R(S)=& R(S\setminus Z)\cdot(1-\lambda(Z,S)) \nonumber\\ &+\left[\alpha(Z)-\frac{\alpha(S_2)U(S_2)(U(S_2)+u_0)(1-\lambda(Z,S)}{(U(S)-U(Z)+u_0)^2}\right]\cdot\lambda(Z,S).
		\end{align}
		
		\item if $Z\subseteq S_2$,
		
		\begin{align}
		R(S)=& R(S\setminus Z)\cdot\left(1-\lambda(Z,S)\right)\nonumber\\
		& +\left[\frac{\alpha(Z)(U(S_2)+u_0)}{U(S)+u_0}+\frac{\alpha(S_2\setminus Z)(U(S_2)-U(Z))}{U(S)-U(Z)+u_0}\cdot\frac{U(S_1)}{U(S)+u_0}\right]\cdot\lambda(Z,S).
		\end{align}
		
	\end{itemize}
	
\end{lemma}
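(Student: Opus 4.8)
The plan is to verify both identities by direct algebraic manipulation, using Proposition \ref{decomposition} to peel off the contribution of $Z$ from that of its complement. The first move is to rewrite the revenue \eqref{rev_short} in a compact form: writing $D = U(S)+u_0$ and noting that $1-\frac{U(S_1)}{D}=\frac{U(S_2)+u_0}{D}$, one obtains
\[
R(S)=\frac{\alpha(S_1)U(S_1)}{D}+\frac{\alpha(S_2)U(S_2)(U(S_2)+u_0)}{D^2}.
\]
The same identity applied to $S\setminus Z$ (which is again of the form level-1 part $\uplus$ level-2 part, with $d=U(S\setminus Z)+u_0=D-U(Z)$ in place of $D$) gives an analogous compact expression for $R(S\setminus Z)$. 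Since $1-\lambda(Z,S)=d/D$, multiplying $R(S\setminus Z)$ by $(1-\lambda(Z,S))$ converts its leading denominator from $d$ to $D$ while leaving the quadratic term over $dD$. The proof then reduces to computing $R(S)-R(S\setminus Z)(1-\lambda(Z,S))$ and checking that it equals the bracketed term times $\lambda(Z,S)=U(Z)/D$.

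For the case $Z\subseteq S_1$, I would apply Proposition \ref{decomposition} to split $\alpha(S_1)U(S_1)=\alpha(Z)U(Z)+\alpha(S_1\setminus Z)U(S_1\setminus Z)$. In the difference, the term $\alpha(S_1\setminus Z)U(S_1\setminus Z)/D$ cancels against the corresponding term of $R(S\setminus Z)(1-\lambda(Z,S))$, leaving $\alpha(Z)U(Z)/D$ together with the level-2 residual, whose two copies share the numerator $\alpha(S_2)U(S_2)(U(S_2)+u_0)$ but differ by $\frac{1}{D^2}-\frac{1}{dD}=\frac{d-D}{dD^2}=\frac{-U(Z)}{dD^2}$. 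Factoring out $U(Z)/D=\lambda(Z,S)$ and using $1-\lambda(Z,S)=d/D$ to rewrite $\frac{1}{dD}$ as $\frac{1-\lambda(Z,S)}{d^2}$ reproduces the claimed bracket exactly.

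For the case $Z\subseteq S_2$, I would instead split $\alpha(S_2)U(S_2)=\alpha(Z)U(Z)+\alpha(S_2\setminus Z)U(S_2\setminus Z)$ inside the quadratic term of $R(S)$; here the entire level-1 contribution cancels. Writing $u_2=U(S_2)$, $u_Z=U(Z)$ and $u_{2'}=u_2-u_Z=U(S_2\setminus Z)$, the residual becomes $\frac{\alpha(Z)u_Z(u_2+u_0)}{D^2}+\alpha(S_2\setminus Z)u_{2'}\left[\frac{u_2+u_0}{D^2}-\frac{u_{2'}+u_0}{dD}\right]$, whose first summand already matches the first target term after extracting $\lambda(Z,S)$. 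The crux is the bracket: clearing denominators reduces it to the identity $(u_2+u_0)d-(u_{2'}+u_0)D=U(S_1)U(Z)$, which follows by substituting $d=D-u_Z$ and $u_{2'}=u_2-u_Z$ and then recognising $D-u_2-u_0=U(S_1)$. This single non-routine identity is where I expect the main obstacle to lie; everything else is careful bookkeeping of the four denominators $D$, $d$, $D^2$ and $dD$.
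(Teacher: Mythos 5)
Your proposal is correct and follows essentially the same route as the paper's proof: both rest on Proposition \ref{decomposition} (which the paper re-derives as its Equation \eqref{identity}) and reduce each case to direct algebraic manipulation of the revenue formula \eqref{rev_short}. The only difference is organizational: the paper massages $R(S)$ forward via repeated add-and-subtract steps until $R(S\setminus Z)\left(1-\lambda(Z,S)\right)$ emerges, whereas you verify the identity by computing the difference $R(S)-R(S\setminus Z)\left(1-\lambda(Z,S)\right)$ from the compact form $R(S)=\frac{\alpha(S_1)U(S_1)}{D}+\frac{\alpha(S_2)U(S_2)(U(S_2)+u_0)}{D^2}$, which collapses the second case to the single identity $(u_2+u_0)d-(u_{2'}+u_0)D=U(S_1)U(Z)$ — a tidier bookkeeping of the same argument.
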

\begin{proof}
	For $Z\subseteq X$ and $Z_0\subseteq Z$, we have:
	\begin{align}
	\alpha(Z\setminus Z_0)&=\frac{\sum_{x\in Z\setminus Z_0} r(x)u(x)}{U(Z\setminus Z_0)}\nonumber\\
	&=\frac{\sum_{x\in Z} r(x)u(x)}{U(Z\setminus Z_0)}-\frac{\alpha(Z_0)U(Z_0)}{U(Z\setminus Z_0)}\nonumber\\
	&=\frac{\sum_{x\in Z} r(x)u(x)}{U(Z\setminus Z_0)}\cdot\frac{U(Z)}{U(Z)}-\frac{\alpha(Z_0)U(Z_0)}{U(Z\setminus Z_0)}\nonumber\\
	&=\underbrace{\frac{\sum_{x\in Z} r(x)u(x)}{U(Z)}}_{\alpha(Z)}\cdot\frac{U(Z)}{U(Z)-U(Z_0)}-\frac{\alpha(Z_0)U(Z_0)}{U(Z)-U(Z_0)}\nonumber\\
	&=\frac{\alpha(Z)U(Z)-\alpha(Z_0)U(Z_0)}{U(Z)-U(Z_0)}\nonumber\\
	&=\frac{\alpha(Z)U(Z)-\alpha(Z_0)U(Z_0)}{U(Z)-U(Z_0)},
	\end{align}
	which can be rewritten as
	\begin{equation}\label{identity}
	\alpha(Z\setminus Z_0)(U(Z)-U(Z_0))=\alpha(Z)U(Z)-\alpha(Z_0)U(Z_0).
	\end{equation}
	Note also that, when $Z_0=Z$, $\alpha(Z\setminus Z_0)=\alpha(\emptyset)=0$.
	
	The rest of the proof is by case analysis on the level. If $Z\subseteq S_1$, $\lambda(Z,S)=\frac{U(Z)}{U(S)+u_0}$. We have:
	\small
	\begin{align*}
	R(S)=&\frac{\alpha(S_1)U(S_1)}{U(S)+u_0}+\frac{\alpha(S_2)U(S_2)}{U(S)+u_0}\cdot\left(1-\frac{U(S_1)}{U(S)+u_0}\right)\\
	=&\frac{\alpha(S_1)U(S_1)-\alpha(Z)U(Z)}{U(S)+u_0} +\frac{\alpha(Z)U(Z)}{U(S)+u_0}+\frac{\alpha(S_2)U(S_2)}{U(S)+u_0}\cdot\left(1-\frac{U(S_1)}{U(S)+u_0}\right)\\
	=&\frac{\alpha(S_1)U(S_1)-\alpha(Z)U(Z)}{U(S)+u_0}\cdot\frac{U(S)-U(Z)+u_0}{U(S)-U(Z)+u_0} +\frac{\alpha(Z)U(Z)}{U(S)+u_0}+\frac{\alpha(S_2)U(S_2)}{U(S)+u_0}\cdot\left(1-\frac{U(S_1)}{U(S)+u_0}\right)\\
	=&\frac{\alpha(S_1)U(S_1)-\alpha(Z)U(Z)}{U(S)-U(Z)+u_0}\cdot\left(1-\lambda(Z,S)\right) +\alpha(Z)\lambda(Z,S)\nonumber\\
	&+\frac{\alpha(S_2)U(S_2)(U(S_2)+u_0)}{(U(S)-U(Z)+u_0)^2}\cdot\left(\frac{U(S)-U(Z)+u_0}{U(S)+u_0}\right)^2,
	\end{align*}
	\normalsize
	
	\noindent
	where we first add and subtract $\frac{\alpha(Z)U(Z)}{U(S)+u_0}$,
	and multiply and divide the first term by $(U(S)-U(Z)+u_0)$. The
	last step uses the definition of $\lambda(Z,S)$ and multiplies
	and divides the last term by $(U(S)-U(Z)+u_0)^2$. Now applying
	Equation \eqref{identity} to $S_1$ and $Z$ in the
	last equation, we obtain
	
	\small
	\begin{align*}
	R(S)
	=&\frac{\alpha(S_1\setminus Z)(U(S_1)-U(Z))}{U(S)-U(Z)+u_0}\cdot\left(1-\lambda(Z,S)\right) +\alpha(Z)\lambda(Z,S)+\frac{\alpha(S_2)U(S_2)(U(S_2)+u_0)}{(U(S)-U(Z)+u_0)^2}\cdot\left(1-\lambda(Z,S)\right)^2\\
	=&\underbrace{\left[\frac{\alpha(S_1\setminus Z)(U(S_1)-U(Z))}{U(S)-U(Z)+u_0}+\frac{\alpha(S_2)U(S_2)(U(S_2)+u_0)}{(U(S)-U(Z)+u_0)^2}\right]}_{R(S_1\setminus Z\cup S_2)}\cdot\left(1-\lambda(Z,S)\right)\\ &+\left[\alpha(Z)-\frac{\alpha(S_2)U(S_2)(U(S_2)+u_0)\left(1-\lambda(Z,S)\right)}{(U(S)-U(Z)+u_0)^2}\right]\cdot\lambda(Z,S)\\
	=& R(S_1\setminus Z\cup S_2)\left(1-\lambda(Z,S)\right) +\left[\alpha(Z)-\frac{\alpha(S_2)U(S_2)(U(S_2)+u_0)\left(1-\lambda(Z,S)\right)}{(U(S)-U(Z)+u_0)^2}\right]\cdot\lambda(Z,S)\\
	=& R(S\setminus Z)\left(1-\lambda(Z,S)\right) +\left[\alpha(Z)-\frac{\alpha(S_2)U(S_2)(U(S_2)+u_0)\left(1-\lambda(Z,S)\right)}{(U(S)-U(Z)+u_0)^2}\right]\cdot\lambda(Z,S).
	\end{align*}
	\normalsize
	
	%%% SECOND LAYER!
	
	If $Z\subseteq S_2$, the proof is essentially similar. It also uses $\lambda(Z,S)=\frac{U(Z)}{U(S)+u_0}$ and apply Equation \eqref{identity} to $S_2$ and $Z$ to obtain
	%\small
	\small
	\begin{align*}
	R(S)=&\frac{\alpha(S_1)U(S_1)}{U(S)+u_0}+\frac{\alpha(S_2)U(S_2)}{U(S)+u_0}\cdot\left(1-\frac{U(S_1)}{U(S)+u_0}\right)\\
	=&\frac{\alpha(S_1)U(S_1)}{U(S)-U(Z)+u_0}\cdot\left(1-\lambda(Z,S)\right)+\frac{\alpha(Z)U(Z)(U(S_2)+u_0)}{(U(S)+u_0)^2}\\
	&+\frac{(\alpha(S_2)U(S_2)-\alpha(Z)U(Z))(U(S_2)+u_0)}{(U(S)+u_0)^2}\\
	=&\frac{\alpha(S_1)U(S_1)}{U(S)-U(Z)+u_0}\cdot\left(1-\lambda(Z,S)\right)+\frac{\alpha(S_2\setminus Z)(U(S_2)-U(Z))(U(S_2)-U(Z)+u_0)}{(U(S)+u_0)^2}\\
	&+\frac{\alpha(S_2\setminus Z)(U(S_2)-U(Z))U(Z)}{(U(S)+u_0)^2}+\frac{\alpha(Z)U(Z)(U(S_2)+u_0)}{(U(S)+u_0)^2},
	\end{align*}
	\normalsize
	
	\noindent
	where we multiply and divide the first term by $(U(S)-U(Z)+u_0)$ and
	use the definition of $\lambda(Z,S)$ and then add and subtract
	$\frac{\alpha(Z)U(Z)(U(S_2)+u_0)}{(U(S)+u_0)^2}$. The last step uses
	Equation \eqref{identity} and adds and subtracts
	$\frac{\alpha(S_2\setminus Z)(U(S_2)-U(Z))U(Z)}{(U(S)+u_0)^2}$. The
	goal of this manipulation is to form $R(S \setminus
	Z)\left(1-\lambda(Z,S)\right)$. We then obtain
	
	\small
	\begin{align*}
	=&\frac{\alpha(S_1)U(S_1)}{U(S)-U(Z)+u_0}\cdot\left(1-\lambda(Z,S)\right)+\frac{\alpha(S_2\setminus Z)(U(S_2)-U(Z))(U(S_2)-U(Z)+u_0)}{(U(S)+u_0)^2}\\
	&+\frac{\alpha(S_2\setminus Z)(U(S_2)-U(Z))U(Z)}{(U(S)+u_0)^2}+\frac{\alpha(Z)U(Z)(U(S_2)+u_0)}{(U(S)+u_0)^2}\\
	=&\frac{\alpha(S_1)U(S_1)}{U(S)-U(Z)+u_0}\cdot\left(1-\lambda(Z,S)\right)+\frac{\alpha(S_2\setminus Z)(U(S_2)-U(Z))(U(S_2)-U(Z)+u_0)}{(U(S)-U(Z)+u_0)^2}\left(1-\lambda(Z,S)\right)^2\\
	&+\frac{\alpha(S_2\setminus Z)(U(S_2)-U(Z))U(Z)}{(U(S)+u_0)^2}+\frac{\alpha(Z)U(Z)(U(S_2)+u_0)}{(U(S)+u_0)^2}\\
	=&\underbrace{\left[\frac{\alpha(S_1)U(S_1)}{U(S)-U(Z)+u_0}+\frac{\alpha(S_2\setminus Z)(U(S_2)-U(Z))(U(S_2)-U(Z)+u_0)}{(U(S)-U(Z)+u_0)^2}\right]}_{R(S_1\cup S_2\setminus Z)}\cdot\left(1-\lambda(Z,S)\right)\nonumber\\
	&+\frac{\alpha(Z)U(Z)(U(S_2)+u_0)}{(U(S)+u_0)^2}+\frac{\alpha(S_2\setminus Z)(U(S_2)-U(Z))U(Z)}{(U(S)+u_0)^2}\nonumber\\
	&-\frac{\lambda(Z,S)\left(1-\lambda(Z,S)\right)\alpha(S_2\setminus Z)(U(S_2)-U(Z))(U(S_2)-U(Z)+u_0)}{(U(S)-U(Z)+u_0)^2},
	\end{align*}
	\normalsize where the second step multiplies and divides the second
	term by $(U(S)-U(Z)+u_0)$ and uses the definition of
	$\lambda(Z,S)$. The third step splits the second term by expressing
	$(1-\lambda(Z,S))^2$ as $(1-\lambda(Z,S)) -
	\lambda(Z,S)(1-\lambda(Z,S))$ in order to identify the term $R(S_1\cup
	S_2\setminus Z)$. Now, putting together the remaining terms, we
	obtain:
	
	\small
	\begin{align*}
	R(S) =& R(S_1\cup S_2\setminus Z)\left(1-\lambda(Z,S)\right)\\
	+&\lambda(Z,S)\Biggl[\frac{\alpha(Z)(U(S_2)+u_0)}{U(S)+u_0}+\frac{\alpha(S_2\setminus Z)(U(S_2)-U(Z))}{U(S)+u_0}\\
	&-\frac{\left(1-\lambda(Z,S)\right)\alpha(S_2\setminus Z)(U(S_2)-U(Z))(U(S_2)-U(Z)+u_0)}{(U(S)-U(Z)+u_0)^2}\Biggr]\\
	=&  R(S_1\cup S_2\setminus Z)\left(1-\lambda(Z,S)\right)\\
	&+\lambda(Z,S)\cdot\left[\frac{\alpha(Z)(U(S_2)+u_0)}{U(S)+u_0}+\frac{\alpha(S_2\setminus Z)(U(S_2)-U(Z))}{U(S)+u_0}\left[1-\left(1-\frac{U(S_1)}{U(S)-U(Z)+u_0}\right)\right]\right]\\
	=& R(S \setminus Z)\left(1-\lambda(Z,S)\right)+\left[\frac{\alpha(Z)(U(S_2)+u_0)}{U(S)+u_0}+\frac{\alpha(S_2\setminus Z)(U(S_2)-U(Z))}{U(S)-U(Z)+u_0}\cdot\frac{U(S_1)}{U(S)+u_0}\right]\cdot\lambda(Z,S).
	\end{align*}
	\normalsize
	
	\noindent			
	where the second line uses the definition of $\lambda(Z,S)$ to factorize the expression and the last step just simplifies the resulting expressions.
\end{proof}

	Now we can continue with the proof of Proposition \ref{global_bound}. Let $Z\subseteq S_i^*$ for some $i=1,2$. Consider first the case in
	which the optimal solution $S^*$ contains only products from level
	$i$ $(i \in \{1,2\})$. Then,
	\begin{align*}
	R^*&=\frac{\sum_{y \in S^*}r(y)u(y)}{\sum_{y\in S^*}u(y)+u_0}\\
	&=\frac{\sum_{y \in S^*\setminus Z}r(y)u(y)}{\sum_{y\in S^*\setminus Z}u(y)+u_0}\cdot\frac{\sum_{y\in S^*}u(y) -U(Z)+u_0}{\sum_{y\in S^*}u(y)+u_0} + \frac{\alpha(Z)U(Z)}{\sum_{y\in S^*}u(y)+u_0}\\
	&=\underbrace{\frac{\sum_{y \in S^*\setminus Z}r(y)u(y)}{\sum_{y\in S^*\setminus Z}u(y)+u_0}}_{R(S^*\setminus Z)}\cdot\left(1-\lambda(Z,S^*)\right) + \alpha(Z)\lambda(Z,S^*)\\
	&=R(S^*\setminus Z)\cdot\left(1-\lambda(Z,S^*)\right) + \alpha(Z)\lambda(Z,S^*).
	\end{align*}
	
	\noindent	
	The optimal solution is a convex combination of $R(S^*\setminus Z)$ and $\alpha(Z)$. By optimality of $R^*$, $R(S^*\setminus Z)\leq R^*$ and hence $\alpha(Z)\geq R^*$.
	
	Consider the case in which the solution is non-empty in both levels,
	and suppose that $\alpha(Z)<R^*$. We now show that this is not
	possible. The proof considers two independent cases, depending on the
	level that contains $Z$.
	
	If $Z\subseteq S^*_1$, by Lemma
	\ref{convex_lemma}, the revenue of $S^*$ can be expressed as
	\small
	\begin{equation}
	R(S^*)= R(S^*\setminus Z)\cdot(1-\lambda(Z,S^*)) +\underbrace{\left[\alpha(Z)-\frac{\alpha(S^*_2)U(S_2)(U(S^*_2)+u_0)(1-\lambda(Z,S^*)}{(U(S^*)-U(Z)+u_0)^2}\right]}_{\Gamma_Z}\cdot\lambda(Z,S^*).
	\end{equation}
	\normalsize
	
	\noindent
	$R^*$ is a convex combination of $R(S^*\setminus Z)$ and $\Gamma_Z$. We show that $\Gamma_Z < R^*$.
	\small
	\[
	\begin{aligned}
	\Gamma_Z&= \alpha(Z)-\frac{\alpha(S_2^*)U(S_2^*)(U(S_2^*)+u_0)\left(1-\lambda(Z,S^*)\right)}{(U(S^*)-U(Z)+u_0)^2} \\
	&=\alpha(Z)-\frac{\alpha(S_2^*)U(S_2^*)(U(S_2^*)+u_0)}{(U(S^*)-U(Z)+u_0)(U(S^*)+u_0)}&& \text{/using definition of $\lambda(Z,S^*)$}\\
	&=\alpha(Z)-\frac{\alpha(S_2^*)U(S_2^*)(1-\lambda(S_1^*,S^*))}{(U(S^*)-U(Z)+u_0)}&& \text{/by definition of $\lambda(S_1^*,S^*)$ }\\
	&\leq \alpha(Z)-\frac{R^*U(S_2^*)}{U(S^*)-U(Z)+u_0}&& \text{/Using proposition \ref{R2_bound}}\\
	&< R^*\left(1-\frac{U(S_2^*)}{U(S^*)-U(Z)+u_0}\right)&& \text{/using the assumption $\alpha(Z)<R^*$}\\
	&<R^*.
	\end{aligned}
	\]
	\normalsize
	
	\noindent
	Since $R(S^*/Z) \leq R^*$, we have that $R^*<R^*$ and hence it must be the case that $\alpha(Z)\geq R^*$.

	\noindent Now, if $Z\subseteq S^*_2$, by Lemma \ref{convex_lemma}, the revenue of
	$S^*$ can be expressed as
	\small
	\begin{align}
	R(S^*)&=R(S^*\setminus Z)\left(1-\lambda(Z,S^*)\right)\nonumber\\
	&+\underbrace{\left[\frac{\alpha(Z)(U(S_2^*)+u_0)}{U(S^*)+u_0}+\frac{\alpha(S_2^*\setminus Z)(U(S_2^*)-U(Z))}{U(S^*)-U(Z)+u_0}\cdot\frac{U(S_1^*)}{U(S^*)+u_0}\right]}_{\Gamma_Z}\cdot\lambda(Z,S^*).
	\end{align}
	\normalsize
	
	\noindent
	$R^*$ is thus a convex combination of $R(S^*\setminus Z)$ and $\Gamma_Z$. Again, we show that $\Gamma_Z < R^*$:
	\small
	\[
	\begin{aligned}
	\Gamma_Z&= \frac{\alpha(Z)(U(S_2^*)+u_0)}{U(S^*)+u_0}+\frac{\alpha(S_2^*\setminus Z)(U(S_2^*)-U(Z))}{U(S^*)-U(Z)+u_0}\cdot\frac{U(S_1^*)}{U(S^*)+u_0}\\
	&= \alpha(Z)(1-\lambda(S_1^*,S^*))+\frac{\alpha(S_2^*\setminus Z)(U(S_2^*)-U(Z))}{U(S^*)-U(Z)+u_0}\cdot \lambda(S_1^*,S^*) && \text{/by definition of $\lambda(S_1^*,S^*)$}\\
	&< \alpha(Z)(1-\lambda(S_1^*,S^*))+\underbrace{\frac{\alpha(S_2^*\setminus Z)(U(S_2^*)-U(Z))}{U(S_2^*)-U(Z)+u_0}}_{R(S_2^*\setminus Z)}\cdot \lambda(S_1^*,S^*) && \text{/replacing $U(S^*)$ by $U(S_2^*)$ }\\
	&< R^*(1-\lambda(S_1^*,S^*)) + R(S_2^*\setminus Z)\cdot \lambda(S_1^*,S^*) && \text{/using that $\alpha(Z)<R^*$}\\
	&< R^*(1-\lambda(S_1^*,S^*)) +  R(S_2^*\setminus Z)\lambda(S_1^*,S^*) && \text{/Using the optimality of $R^*$}\\
	&< R^*.
	\end{aligned}
	\]
	\normalsize
	\noindent
	Hence, it must be the case that $\alpha(Z)\geq R^*$, completing the proof.

\endproofatend

\noindent
The converse of Proposition \ref{global_bound} does not hold: Example
\ref{ex:notallincluded} presents an
instance where the optimal solution does not contain all the products
with a revenue higher than $R^*$.

\begin{exmp}
	\label{ex:notallincluded}
	We show that some products with revenue greater or equal than $R^*$
	may not be included in an optimal assortment.  Let
	$X_1=\left\{x_{11}\right\}$, $X_2=\left\{x_{21}\right\}$, and
	$X=X_1\uplus X_2$. Let the revenues be $r(x_{11})=r(x_{21})=1$ and
	the utilities be $u(x_{11})=10, u(x_{21})=1$, and $u_0=1$. Consider
	the possible assortments and their expected revenues: \\
	
	\begin{tabular}{ll}
		$R(\left\{x_{11}\right\})$& $= \frac{u(x_{11})r(x_{11})}{u(x_{11})+u_0}=\frac{10\cdot 1}{10+1} =0.9\overline{09}$ \\
		$R(\left\{x_{21}\right\})$& $= \frac{u(x_{21})r(x_{21})}{u(x_{21})+u_0}=\frac{1\cdot 1}{1+1} =0.5$ \\
		$R(\left\{x_{11},x_{21}\right\})$&= $\frac{u(x_{11})r(x_{11})}{u(x_{11})+u(x_{21})+u_0}+\left(1-\frac{u(x_{11})}{u(x_{11})+u(x_{21})+u_0}\right)\cdot\frac{u(x_{21})r(x_{21})}{u(x_{11})+u(x_{21})+u_0}$\\
		&$=\frac{10\cdot 1}{10+1+1}+\left(1-\frac{10}{10+1+1}\right)\cdot\frac{1\cdot 1}{10+1+1}$\\
		&$=\frac{10}{12}+\left(1-\frac{10}{12}\right)\cdot\frac{1}{10+1}$\\
		&$= 0.847\overline{2}$
	\end{tabular}

	\noindent
	The optimal assortment is $S^*=\left\{x_{11}\right\}$. However, we have
	that $r(x_{21})=1>R^*$, but $x_{21}$ is not part of the optimal assortment.
\end{exmp}

The following corollary, whose proof
is also in Appendix \ref{proofs}, is a direct consequence of Proposition
\ref{global_bound}.
	
\begin{corollary}
\label{subset_bound}
For any non-empty subset $S_0\subseteq S^*$, where $S^*$ is an optimal solution, we have $\alpha(S_0)\geq R^*$.
\end{corollary}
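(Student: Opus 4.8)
The plan is to reduce the general statement to Proposition \ref{global_bound}, which already supplies the bound whenever the subset is contained in a single level, by splitting $S_0$ along the two levels and then invoking the convex-combination identity of Proposition \ref{decomposition}. The observation driving everything is that $\alpha$ of a disjoint union is a $U$-weighted average of the $\alpha$-values of the parts, so a bound that holds on each part transfers to the whole.

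First I would set $H = S_0 \cap S_1^*$ and $T = S_0 \cap S_2^*$. Since $S^* = S_1^* \uplus S_2^*$ is a disjoint partition, $H$ and $T$ are disjoint and $S_0 = H \uplus T$; because $S_0$ is non-empty, at least one of $H,T$ is non-empty. Before doing anything else I would dispose of the degenerate cases: if $T=\emptyset$ then $S_0 = H \subseteq S_1^*$, so Proposition \ref{global_bound} gives $\alpha(S_0)=\alpha(H)\geq R^*$ immediately, and symmetrically if $H=\emptyset$. This clears the single-level situations and leaves only the case where both $H$ and $T$ are non-empty.

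For that main case I would apply Proposition \ref{decomposition} with $S=S_0$ to write
\[
\alpha(S_0) = \frac{\alpha(H)U(H) + \alpha(T)U(T)}{U(S_0)} = \alpha(H)\cdot\frac{U(H)}{U(S_0)} + \alpha(T)\cdot\frac{U(T)}{U(S_0)},
\]
which exhibits $\alpha(S_0)$ as a convex combination of $\alpha(H)$ and $\alpha(T)$: the weights $U(H)/U(S_0)$ and $U(T)/U(S_0)$ are non-negative and sum to one since $U(H)+U(T)=U(S_0)$. As $H\subseteq S_1^*$ and $T\subseteq S_2^*$, Proposition \ref{global_bound} yields $\alpha(H)\geq R^*$ and $\alpha(T)\geq R^*$. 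A convex combination of two quantities that are each at least $R^*$ is itself at least $R^*$, so $\alpha(S_0)\geq R^*$, which completes the argument.

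The proof is essentially routine once the splitting by level is recognized, and there is no genuine obstacle. The only point requiring care is the bookkeeping around the empty intersections: Proposition \ref{global_bound} applies only to subsets sitting entirely inside $S_1^*$ or inside $S_2^*$, so the degenerate cases must be separated out first, and in the remaining case one must confirm that the weights arising from Proposition \ref{decomposition} genuinely form a convex combination (non-negativity is immediate from $u>0$, and they sum to one because the additivity of $U$ over the disjoint union $S_0 = H \uplus T$).
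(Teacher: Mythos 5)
Your proof is correct and follows essentially the same route as the paper: both reduce the claim to Proposition \ref{global_bound} by exploiting the fact that $\alpha$ of a disjoint union is a $U$-weighted average of the $\alpha$-values of its parts. The only difference is granularity --- the paper decomposes $S_0$ into singletons, noting that Proposition \ref{global_bound} gives $r(x)=\alpha(\{x\})\geq R^*$ for every $x\in S^*$, so that $\alpha(S_0)\geq R^*$ follows in one line, which avoids both your case analysis on empty intersections and the explicit appeal to Proposition \ref{decomposition}.
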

\proofatend By Proposition \ref{global_bound}, we have
$\alpha(\left\{x\right\})=r(x)\geq R^*$ for all $x\in S^*$. For each set $S_0\subseteq S^*$, we
have:
\begin{align*}
\alpha(S_0)=&\frac{\sum_{x\in S_0} u(x)r(x)}{\sum_{x\in S_0}u(x)}\\
  	\geq& R^*\frac{\sum_{x\in S_0} u(x)}{\sum_{x\in S_0}u(x)}\quad \text{/ by Proposition \ref{global_bound}}\\
	=&R^*.
\end{align*}
 \endproofatend
	
\noindent
The corollary above implies that $\alpha(\left\{x\right\})=r(x)\geq
R^*$ for all $x\in S^*$. Thus, every product in an optimal assortment
has a revenue greater than or equal to $R^*$.

In the following example we show that the well known revenue-ordered assortment strategy for the assortment problem does not always lead to optimality.

\begin{exmp}[Revenue-Ordered assortments are not optimal]
	\label{ex:RO_not_optimal} Let
	$X_1=\left\{x_{11}\right\}$, $X_2=\left\{x_{21}\right\}$, and
	$X=X_1\uplus X_2$. Let $r(x_{11})=10$ and $r(x_{21})=12$. Let
	the utilities be $u(x_{11})=10, u(x_{21})=2$, and $u_0=1$. A direct calculation shows that the
	 revenues for all possible assortments under this setting are:
		
  \begin{center}\label{tab:RO_not_optimal}
	\begin{tabular}{@{} cc @{}}
		\toprule
		$S$ &        $R(S)$  \\
		\midrule
		$\{x_{11}\}$  & $9.\overline{09}$    \\	
		$\{x_{21}\}$ &   8 \\
		$\{x_{11},x_{21}\}$  & $ 8.12$    \\	
		\bottomrule
	\end{tabular}
\end{center}
	
	\noindent
	The optimal assortment is $S^*=\left\{x_{11}\right\}$, yielding a revenue of $R^*=9.\overline{09}$. However,  the best revenue ordered assortment is $S'=\{x_{11},x_{21}\}$, obtaining a revenue of $R'= 8.12$, this means that the revenue-ordered assortment strategy provides an approximation ratio of $\frac{R'}{R^*}\approx 89.3\%$ for this particular instance.
\end{exmp}

Up to this point, we understand some properties of optimal assortments, but we don't know the solution structure or an algorithm to calculate it. In the following section we propose a natural extension of the revenue-ordered assortment strategy, the \textit{revenue-ordered by level assortments}, and show that guarantees optimality, and it can be computed in polynomial time.

\section{Optimality of Revenue-Ordered Assortments by level}
\label{sec:optimality}

This section proves that optimal assortments under the SML are
revenue-ordered by level, generalizing the traditional results for the
MNL \citep{talluri2004revenue}. As a corollary, the optimal assortment
problem under the SML is polynomial-time solvable.

\begin{defn}[Revenue-Ordered Assortment by Level] Denote by $N_{ij}$
  the set of all products in level $i$ with a revenue of at least
  $r_{ij}$ ($1 \leq j \leq m_i$) and fix $N_{i0}=\emptyset$ by convention. A
  revenue-ordered assortment by level is a set $S=N_{1j_1} \uplus
  N_{2j_2} \subseteq X$ for some $0 \leq j_1\leq m_1$ and $0\leq j_2 \leq m_2$. We use
  $\mathcal{A}$ to denote the set of revenue-ordered assortments by
  level.
\end{defn}

\noindent
When an assortment $S = S_1 \uplus S_2$ is not revenue-ordered by level, it follows that
\[
\exists k \in \{1,2\}, z \in X_k \setminus S_k, y \in S_k: \quad r(z) \geq r(y).
\]
We say that $S$ has a gap, the gap is at level $k$, and $z$ belongs to the gap. We now define
the concept of {\em first gap}, which is heavily used in the proof.

\begin{defn}[First Gap of an Assortment]
\label{firstgap_def}
Let $S = S_1 \uplus S_2$ be an assortment with a gap and let $k$ be the smallest level with a gap.
Let $\hat{r}=\max_{y\in X_k\setminus S_k}r(y)$ be the maximum revenue of a product in level $k$ not contained in $S_k$. The \emph{first gap} of $S$ is a set of products $G \subseteq X_k\setminus S$ defined as follows:
\begin{itemize}
\item If $\max_{x\in S_k}r(x)<\hat{r}$, then the gap $G$ consists of
  all products with higher revenues than the products in assortment
  $S$, i.e.,
\[
G=\{y\in X_k\setminus S_k\mid r(y)\geq\max_{x\in S_k}r(x)\}.
\]

\item Otherwise, when $\max_{x\in S_k}r(x)\geq\hat{r}$, define the following quantities:
\begin{equation}\label{piv_rev}
r_M=\min_{\substack{x\in S_k\\r(x)\geq \hat{r}}}r(x)\quad \text{ and }\quad r_m=\max_{\substack{x\in S_k\\r(x)\leq \hat{r}}}r(x).
\end{equation}
The set $G$ contains products with revenues in $\lbrack r_m,r_M\rbrack$, i.e.,
\[
G=\{y\in X_k\setminus S_k\mid r_m\leq r(y)\leq r_M\}.
\]
\end{itemize}
\end{defn}

\noindent
We are now in a position to state the main theorem of this paper.

\begin{thm}
\label{main_thm}
Under the SML, any optimal assortment is revenue-ordered by level.
\end{thm}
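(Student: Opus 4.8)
The plan is to argue by contradiction. Suppose $S^*=S_1^*\uplus S_2^*$ is optimal but \emph{not} revenue-ordered by level; then it has a gap, and by Definition \ref{firstgap_def} it has a well-defined first gap $G$ at the smallest level $k$ containing a gap. Since $G$ consists of products of level $k$ that are absent from $S^*$ but whose revenues are bounded below by revenues of products that \emph{are} in $S_k^*$, and since every product of $S^*$ has revenue at least $R^*$ (the consequence of Proposition \ref{global_bound} recorded after Corollary \ref{subset_bound}), I would first note $\alpha(G)\geq R^*$; in the ``top'' sub-case of the first gap I would strengthen this to $\alpha(G)\geq\alpha(S_k^*)$ via Proposition \ref{bound_alpha}. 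The goal is to show that re-inserting the gap \emph{strictly} increases revenue, i.e.\ $R(S^*\cup G)>R^*$, which contradicts optimality and proves the theorem.

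The reduction I would use is Lemma \ref{convex_lemma} applied to the assortment $S:=S^*\cup G$ with $Z:=G$; this is legitimate because $G$ is disjoint from $S^*$ and lies entirely in level $k$, so $G\subseteq(S^*\cup G)_k$ and $(S^*\cup G)\setminus G=S^*$. The lemma then writes $R(S^*\cup G)=R^*\bigl(1-\lambda(G,S^*\cup G)\bigr)+\Gamma_G\,\lambda(G,S^*\cup G)$ as a convex combination of $R^*$ and the bracketed quantity $\Gamma_G$ supplied by the lemma. Because $U(G)>0$ forces $\lambda(G,S^*\cup G)\in(0,1)$, the desired strict inequality $R(S^*\cup G)>R^*$ is \emph{equivalent} to the single scalar inequality $\Gamma_G>R^*$. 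Thus the whole theorem collapses to proving $\Gamma_G>R^*$ for the first gap.

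Establishing $\Gamma_G>R^*$ is the heart of the argument, and I would split it by the level $k$ of the gap. If $k=1$, then $\Gamma_G=\alpha(G)-\frac{\alpha(S_2^*)U(S_2^*)(U(S_2^*)+u_0)\bigl(1-\lambda(G,S^*\cup G)\bigr)}{(U(S^*)+u_0)^2}$, where the subtracted term is exactly the level-$2$ contribution to $R^*$ damped by $(1-\lambda)$; I would bound it using $\alpha(S_2^*)\geq R^*/(1-\lambda(S_1^*,S^*))$ from Proposition \ref{R2_bound} together with $\alpha(G)\geq\alpha(S_1^*)\geq R^*$ from Propositions \ref{bound_alpha} and \ref{R_bounds_1}, running essentially the same chain of inequalities used inside the proof of Proposition \ref{global_bound} but in the opposite direction. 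If $k=2$, then $\Gamma_G=\frac{\alpha(G)(U(S_2^*)+U(G)+u_0)}{U(S^*)+U(G)+u_0}+\frac{\alpha(S_2^*)U(S_2^*)}{U(S^*)+u_0}\cdot\frac{U(S_1^*)}{U(S^*)+U(G)+u_0}$ is a sum of two positive terms, and here the slack comes from second-level products having to carry high revenue: the bound $\alpha(G)\geq\alpha(S_2^*)\geq R^*/(1-\lambda(S_1^*,S^*))>R^*$ furnishes exactly the margin needed to push the weighted sum above $R^*$.

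The main obstacle I anticipate lies entirely in this last step, and specifically in the ``middle'' sub-case of the first gap (when $\max_{x\in S_k}r(x)\geq\hat{r}$), where $G$ is a contiguous block of revenues confined to $[r_m,r_M]$ rather than sitting strictly above all of $S_k^*$. There one only has $\alpha(G)\geq r_m\geq R^*$ and can no longer invoke $\alpha(G)\geq\alpha(S_k^*)$, so the convexity bookkeeping must exploit the contiguity of $G$ more carefully, and the strictness of $\Gamma_G>R^*$ (as opposed to $\geq$) must be argued even when tied revenues equal $R^*$; this is precisely why the first gap is \emph{defined} as this sandwiched block and why $k$ is taken minimal, so that the level below $k$ is already revenue-ordered and contributes no further correction terms. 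Once $\Gamma_G>R^*$ is verified across the level-$1$/level-$2$ split and both sub-cases, the contradiction $R(S^*\cup G)>R^*$ is immediate and the theorem follows.
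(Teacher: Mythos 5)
Your reduction of the theorem to the single inequality $\Gamma_G > R^*$ --- i.e., the claim that re-inserting the first gap always strictly increases revenue --- is a genuine gap, and it cannot be repaired. That inequality does not follow from Propositions \ref{bound_alpha}, \ref{R_bounds_1}, \ref{R2_bound} and \ref{global_bound}, because it is false for assortments satisfying all of their conclusions. Concretely, take $u_0=1$; let level 1 consist of $t$ with $r(t)=10.4$, $u(t)=0.01$ (offered) and $g$ with $r(g)=12$, $u(g)=0.01$ (not offered), and let level 2 consist of $y$ with $r(y)=11$, $u(y)=10$ (offered). For $S=\{t\}\uplus\{y\}$ one gets $R(S)\approx 9.991$; every product and every subset of each level satisfies $\alpha(\cdot)\geq R(S)$, and $\alpha(S_2)=11\geq R(S)/(1-\lambda(S_1,S))\approx 10.0004$, so every necessary condition you invoke holds --- and this is even the ``top'' sub-case, with $\alpha(G)=12>\alpha(S_1)$. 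Yet $\Gamma_G=12-\frac{11\cdot 10\cdot 11}{11.01\cdot 11.02}\approx 2.03$, far below $R(S)$, and indeed $R(S\cup G)\approx 9.984<R(S)$: inserting the gap product \emph{strictly lowers} revenue, because an extra level-1 product dilutes the level-2 term (it reduces the probability of ever reaching level 2), which here carries essentially all the revenue. So no chain of inequalities run ``in the opposite direction'' of Proposition \ref{global_bound} can establish $\Gamma_G>R^*$, and your anticipated obstacle (contiguity and ties in the middle sub-case) is a misdiagnosis: the single-candidate strategy already fails in the top sub-case with strictly separated revenues.

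What is missing is a second candidate. The paper's proof is intrinsically a two-candidate argument: alongside $S\cup G$ it considers the tail removal $S\setminus T$, where $T=\{x\in S_k : r(x)\leq\min_{g\in G}r(g)\}$, and shows that a positive weighted combination of the two differences $R(S)-R(S\setminus T)$ and $R(S)-R(S\cup G)$ is strictly negative; hence at least one of the two candidates strictly beats $S$, though which one wins depends on the instance (in the example above it is $S\setminus T=\{y\}$, with $R(\{y\})=10>R(S)$). In the level-2-gap case the paper additionally needs a third fallback, showing that if both differences were nonnegative then the head $H$ offered alone would satisfy $R(H)>R(S)$. Your proposal never considers deleting products from $S^*$, and without that option the contradiction simply cannot be derived.
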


The proof is fairly technical, and is relegated to Appendix \ref{proofs}.
The intuition behind it is the following: Assume that $S$ is an optimal solution with at least one gap as in Definition \ref{firstgap_def}. Let
$G$ be the first gap of $S$, and that $G$ occurs at level $k$.  Define
$S_k=H\cup T$ with $H,T \subseteq X_k$ and
\[
H=\{x \in S_k \, \mid \, r(x)\geq \max_{g\in G}r(g)\}
\]
and
\[
T=\{x \in S_k \, \mid \, r(x)\leq \min_{g\in G}r(g)\}.
\]
We call the set $H$ as the {\em head} and the set
$T$ is called the {\em tail}.  We prove that is always possible to
select an assortment that is revenue-ordered by level and has
revenue greater than $R(S)$ (contradicting optimality).  The proof shows that such an
assortment can be obtained either by including the gap $G$ in $S$ or
by eliminating $T$ from $S$. Figure \ref{fig:stats} illustrates
these concepts visually.

\begin{figure}[!ht]
\centering
\begin{tikzpicture}
\begin{axis}[
xbar stacked,
legend style={
legend columns=4,
				at={(xticklabel cs:0.5)},
				anchor=north,
				draw=none
			},
			ytick=data,
			axis y line*=none,
			axis x line*=bottom,
			tick label style={font=\footnotesize},
			legend style={font=\footnotesize},
			label style={font=\footnotesize},
			xtick style={draw=none},
			ytick style={draw=none},
			title={Graphical representation of a gap on a fixed level},
			xtick={10,590},
			xticklabels={$+$,$-$},
			width=.9\textwidth,
			bar width=6mm,
			xlabel={Revenue},
			yticklabels={$H$, $H\cup G\cup T$ , , $H\cup T$},
			xmin=0,
			xmax=600,
			area legend,
			y=8mm,
			enlarge y limits={abs=0.625},
			]
			\addplot[BORD,fill=ALLIN] coordinates
			{(150,0) (200,1) (0,2) (150,3)};
			\addplot[BORD,fill=EMPTY] coordinates
			{(0,0) (0,1) (0,2) (50,3) };
			\addplot[BORD,fill=POSSIBLY] coordinates
			{(0,0) (150,1) (0,2) (150,3) };
			
			\legend{All products on the assortment, No products selected, Possibly some products missing}
			% ******** Set comments position ************
			\coordinate (original) at (500,3);
			\coordinate (explanation) at (200,2);
			\coordinate (drop) at (300,0);
			\coordinate (fill) at (475,1);
			\coordinate (start) at (0,-0.625);
			\coordinate (end) at (600,-0.625);
			\coordinate (end_rev) at (635,-0.625);
			
			\end{axis}
			\node at (drop) {Candidate 2, $T$ is removed};
			\node at (fill) {Candidate 1, $G$ is added};
			\node at (original) {Original assortment};
			\node at (end_rev) {Revenue};
			\draw[->] (start) -- (end);

			\draw [decorate,decoration={brace,amplitude=5pt},xshift=0pt,yshift=6pt] (0,3) -- (3.3,3) node [black,midway,yshift=10pt] {\footnotesize $H$};
			\draw [decorate,decoration={brace,amplitude=5pt},xshift=0.05pt,yshift=6pt] (3.3,3) -- (4.4,3) node [black,midway,yshift=10pt] {\footnotesize $G$};
			\draw [decorate,decoration={brace,amplitude=5pt},xshift=0.2pt,yshift=6pt] (4.4,3) -- (7.75,3)node [black,midway,yshift=10pt] {\footnotesize $T$};
			%\node at (explanation) {\underline{New candidate assortments}};
			% ********* end of changes **********
			\end{tikzpicture}
			\caption{Representation of a level containing a gap $G$ at the top, and the two proposed candidates fixing the gap by either adding $G$, or removing $T$. }
			\label{fig:stats}
		\end{figure}

\proofatend
  Assume that $S$ is an optimal solution with at least one gap,
  $G$ is the first gap of $S$, and $G$ occurs at level $k$.  Define
  $S_k=H\cup T$ with $H,T \subseteq X_k$ and
  \[
      H=\{x \in S_k \, \mid \, r(x)\geq \max_{g\in G}r(g)\}
  \]
  and
  \[
      T=\{x \in S_k \, \mid \, r(x)\leq \min_{g\in G}r(g)\}.
  \]
  In the following, the set $H$ is called the {\em head} and the set
  $T$ is called the {\em tail}.  We prove that is always possible to
  select an assortment that is revenue-ordered by level and has
  revenue greater than $R(S)$.  The proof shows that such an
  assortment can be obtained either by including the gap $G$ in $S$ or
  by eliminating $T$ from $S$. Figure \ref{fig:stats} illustrates
  these concepts visually. The proof is by case analysis on the level
  of $G$.

Consider first the case where $G$ is in the first
level. We can define $S=S_1 \uplus S_2$, with $S_1=H
\cup T$ as defined above and $S_2\subseteq X_2$. The
revenue for $S$ is

\begin{align*}\label{rev_gapopt1}
R(S_1\cup S_2)&=\frac{\alpha(S_1)U(S_1)}{U(S)+u_0}+ \left(1-\frac{U(S_1)}{(U(S)+u_0)}\right)\cdot \frac{\alpha(S_2)U(
S_2)}{U(S)+u_0}\\
&=\frac{\alpha(H)U(H)}{U(S)+u_0}+\frac{\alpha(T)U(T)}{U(S)+u_0} + \frac{U(S_2)+u_0}{(U(S)+u_0)^2}\cdot \alpha(S_2)U(S_2)
\end{align*}
			
\noindent
where we used Proposition \ref{decomposition} on $S_1$ for deriving
the second equality. We show that assortment $H\cup S_2$ or assortment
$H\cup G\cup T\cup S_2$ provides a revenue greater than $R(S = H\cup
T\cup S_2)$, contradicting our optimality assumption for $S$. The proof
characterizes the differences between the revenues of $S$ and the two
considered assortments, adds those two differences, and shows that
this value is strictly less than zero, implying that at least one of
the differences is strictly negative and hence that one of these
assortments has a revenue larger than $R(S)$. $R(H
\cup S_2)$ can be expressed as

\small
\begin{equation}
\label{rev_s1Ut}
\frac{\alpha(H)U(H)}{U(H)+U(S_2)+u_0}+ \frac{U(S_2)+u_0}{(U(H)+U(S_2)+u_0)^2}\cdot \alpha(S_2)U(S_2).
\end{equation}
\normalsize Let $\theta=U(H)+U(T)+U(S_2)+u_0$ (or, equivalently,
$\theta=U(S)+u_0$). The difference $R(H\cup T \cup
S_2) - R(H\cup S_2)$ is
\small
\begin{align}\label{rev_diff1}
% & R(H\cup T \cup S_2)-R(H\cup S_2)=\nonumber\\
& \frac{U(T)}{\theta(\theta-U(T))}\left[-\alpha(H)U(H)+\alpha(T)(\theta-U(T))-\frac{\alpha(S_2)U(S_2)(U(S_2)+u_0)(2(\theta-U(T)) + U(T))}{\theta(\theta - U(T))}\right].
\end{align}		
\normalsize

\noindent
$R(H\cup G\cup T\cup S_2)$ can be expressed as
\small
\begin{align*}%\lbel{rev_s1UsGUs2Ut}
% R(H\cup G\cup T\cup S_2)=&\nonumber\\
& \frac{1}{U(S)+U(G)+u_0}\cdot\left[\alpha(H)U(H)+\alpha(G)U(G)+\alpha(T)U(T)\right] +\frac{U(S_2)+u_0}{(U(S)+U(G)+u_0)^2}\cdot \alpha(S_2)U(S_2).
\end{align*}
\normalsize

\noindent				
The difference $R(H\cup T\cup S_2) - R(H\cup G\cup T\cup S_2)$ is given by
\small
\begin{align}\label{rev_diff2}
%& R(H\cup T\cup S_2)-R(H\cup G\cup T\cup S_2)=\nonumber\\
&\frac{U(G)}{\theta(\theta+U(G))}\cdot\left[\alpha(H)U(H)+\alpha(T)U(T)-\alpha(G)\theta+\frac{\alpha(S_2)U(S_2)(U(S_2)+u_0)(2\theta + U(G))}{\theta(\theta+U(G))}\right].
\end{align}	
\normalsize	

\noindent
By optimality of $S$, these two differences must be positive. However,
their sum, dropping the positive multiplying term on each difference,
which must also be positive, is given by \small
\begin{align*}\label{sum_revs_proof}
(\ref{rev_diff1})+(\ref{rev_diff2})&=\alpha(T)\theta-\alpha(G)\theta+\frac{\alpha(S_2)U(S_2)(U(S_2)+u_0)}{\theta}\cdot\left[\frac{2\theta+U(G)}{\theta+U(G)}-\frac{2(\theta-U(T)) +U(T)}{(\theta-U(T))}\right]\\
&=(\alpha(T)-\alpha(G))\theta +\frac{\alpha(S_2)U(S_2)(U(S_2)+u_0)}{\theta}\cdot\left[1+\frac{\theta}{\theta+U(G)}-2-\frac{U(T)}{(\theta-U(T))}\right]\\
&=\underbrace{(\alpha(T)-\alpha(G))\theta}_{\leq 0, \text{ by Proposition \ref{bound_alpha} and }\theta\geq 0}+\frac{\alpha(S_2)U(S_2)(U(S_2)+u_0)}{\theta}\cdot\left[\underbrace{\left(\frac{\theta}{\theta+U(G)}-1\right)}_{<0} \underbrace{-\frac{U(T)}{(\theta-U(T))}}_{<0}\right]< 0
\end{align*}	
\normalsize
which contradicts the optimality of $S$.
			
Consider now the case where the gap is in the second level. Using the definition of the head and the tail discussed above, $S$ can be written as
$
S = S_1 \uplus H \cup T.
$
The revenue $R(S)$ is given by
\begin{equation}\label{rev_gapopt2}
R(S_1\cup H \cup T)=\frac{\alpha(S_1)U(S_1)}{\theta} +\frac{\alpha(H)U(H)}{\theta}+\frac{\alpha(T)U(T)}{\theta} -\frac{U(S_1)\alpha(H)U(H)}{\theta^2}-\frac{U(S_1)\alpha(T)U(T)}{\theta^2}
\end{equation}

\noindent
and the proof follows the same strategy as for the case of the first
level. The revenue $R(S_1\cup H)$ is given by
\begin{equation}\label{rev_sUt1}
R(S_1\cup H)=\frac{\alpha(S_1)U(S_1)}{\theta-U(T) } +\frac{\alpha(H)U(H)}{\theta-U(T) } -\frac{U(S_1)\alpha(H)U(H)}{(\theta-U(T) )^2}
\end{equation}

\noindent
and the difference $R(S_1\cup H \cup T) - R(S_1\cup H)$ by
\small
\begin{align}\label{rev_diff21}
% R(S_1\cup H \cup T)-R(S_1\cup  H)&=\nonumber\\
& \frac{U(T) }{\theta(\theta-U(T) )}\cdot\Biggl[-\alpha(S_1)U(S_1) -\alpha(H)U(H)+\alpha(T)(\theta-U(T))\nonumber\\
&+\frac{\alpha(H)U(H)U(S_1)(2\theta-U(T))}{\theta(\theta - U(T))}-\frac{U(S_1)\alpha(T)(\theta-U(T))}{\theta}\Biggr]
\end{align}		
\normalsize
The revenue $R(S_1\cup H \cup G\cup T)$ is given by
\small
\begin{align}\label{rev_sut1ugut2}
% R(S_1\cup H \cup G\cup T)&=\nonumber\\
& \frac{\alpha(S_1)U(S_1)}{\theta+U(G)} +\frac{\alpha(H)U(H)}{\theta+U(G)}+\frac{\alpha(G)U(G)}{\theta+U(G)}+\frac{\alpha(T)U(T)}{\theta+U(G)} -\frac{U(S_1)}{(\theta+U(G))^2}\cdot\left[\alpha(H)U(H)+\alpha(G)U(G)+\alpha(T)U(T)\right]
\end{align}
\normalsize
\noindent				
and the difference $R(S_1\cup H \cup T) - R(S_1\cup H \cup G\cup T)$ by
\small
\begin{align}\label{rev_diff22}
%&  R(S_1\cup H \cup T) - R(S_1\cup H \cup G\cup T) =\nonumber\\
&\frac{U(G)}{\theta(\theta+U(G))}\cdot\Biggl[\alpha(S_1)U(S_1)+\alpha(H)U(H)-\alpha(G)\theta +\alpha(T)U(T)\nonumber\\
&-\frac{\alpha(H)U(H)U(S_1)(2\theta+U(G))}{\theta(\theta+U(G))}-\frac{\alpha(T)U(T)U(S_1)(2\theta+U(G))}{\theta(\theta+U(G))}+\frac{U(S_1)\alpha(G)\theta}{\theta+U(G)}\Biggr]
\end{align}	
\normalsize

\noindent
Adding \eqref{rev_diff21} and \eqref{rev_diff22} and dropping the positive multiplying terms on each difference gives
\small
\begin{align}\label{Second_level_ineq}
&\theta\left(\alpha(T)-\alpha(G)\right)+U(S_1)\left(\alpha(G)-\alpha(T)\right)+U(S_1)\left[\frac{\alpha(T)U(T)}{\theta}-\frac{\alpha(G)U(G)}{\theta+U(G)}\right]\nonumber\\
			& +\frac{U(S_1)}{\theta}\left[\alpha(H)U(H)\left(1+\frac{\theta}{\theta-U(T)}\right)-\alpha(H)U(H)\left(1+\frac{\theta}{\theta+U(G)}\right)-\alpha(T)U(T)\left(1+\frac{\theta}{\theta+U(G)}\right)\right]\nonumber\\
			&= \left(\theta-U(S_1)\right)\left(\alpha(T)-\alpha(G)\right)+\frac{U(S_1)\alpha(H)U(H)}{\theta-U(T)}-\frac{U(S_1)}{\theta+U(G)}\cdot\left[\alpha(H)U(H)+\alpha(G)U(G)+\alpha(T)U(T)\right] \nonumber\\
			&= \underbrace{\left(\theta-U(S_1)\right)\left(\alpha(T)-\alpha(G)\right)}_{\leq 0, \text{ by Proposition \ref{bound_alpha} and }\theta\geq U(S_1)}
			+\underbrace{\frac{U(S_1)\alpha(H)U(H)(U(G)+U(T))}{(\theta-U(T))(\theta+U(G))}-\frac{U(S_1)}{\theta+U(G)}\cdot\left[\alpha(G)U(G)+\alpha(T)U(T)\right]}_{\Gamma}
			\end{align}	
			\normalsize
\noindent
$\Gamma$ cannot be greater or equal than zero, since otherwise
			
\begin{align}\label{gamma_gap2}
&\frac{U(S_1)\alpha(H)U(H)(U(G)+U(T))}{(\theta-U(T))(\theta+U(G))}-\frac{U(S_1)}{\theta+U(G)}\cdot\left[\alpha(G)U(G)+\alpha(T)U(T)\right]\geq 0 \nonumber\\
&\frac{U(S_1)}{(\theta+U(G))}\cdot \left[\frac{\alpha(H)U(H)(U(G)+U(T))}{(\theta-U(T))}-\left(\alpha(G)U(G)+\alpha(T)U(T)\right)\right]\geq 0.
\end{align}
	
\noindent	
The factor on the left is always positive, so Inequality
\eqref{gamma_gap2} implies that the term between brackets is greater
than zero. We now show that this contradicts the optimality of $S$. We
do this by manipulating Inequality \eqref{gamma_gap2} and showing that,
if this inequality holds, then $R(H)>R(S)$.

\small
\begin{align}\label{proof_end2}
			&\frac{\alpha(H)U(H)(U(G)+U(T))}{(\theta-U(T))}-\left(\alpha(G)U(G)+\alpha(T)U(T)\right)\geq 0\nonumber\\
&\frac{\alpha(H)U(H)}{(\theta-U(T))}\geq\frac{\alpha(G)U(G)+\alpha(T)U(T)}{(U(G)+U(T))}\nonumber\\
&\frac{\alpha(H)U(H)}{(U(S_1)+U(H)+u_0)}\geq\frac{\alpha(G)U(G)+\alpha(T)U(T)}{(U(G)+U(T))}\nonumber\\
&\underbrace{\frac{\alpha(H)U(H)}{(U(H)+u_0)}}_{R(H)}\cdot\left[1-\frac{U(S_1)}{U(S_1)+U(H)+u_0}\right]\geq\frac{\alpha(G)U(G)+\alpha(T)U(T)}{(U(G)+U(T))}\nonumber\\
& R(H)\geq\underbrace{R(H)\cdot\frac{U(S_1)}{U(S_1)+U(H)+u_0}}_{>0}+\frac{\alpha(G)U(G)+\alpha(T)U(T)}{(U(G)+U(T))}>R(S) \cdot \frac{U(G)+U(T)}{U(G)+U(T)}>R(S).
\end{align}

\normalsize
\noindent
Inequality \eqref{proof_end2} follows from Proposition
\ref{global_bound} applied to $T\subset S_2$, which implies
$\alpha(T)\geq R(S)$, and from Proposition \ref{bound_alpha}, which implies
$\alpha(G)\geq \alpha(T)$ and hence $\alpha(G)\geq R(S)$.

\endproofatend

\noindent	
The following corollary follows directly from the fact that there are
at most $\mathcal{O}(|X|^2)$ revenue-ordered assortments by level and
the fact that the revenue obtained from a given assortment can be computed in
polynomial time.
	
\begin{corollary}
\label{time_complexity}
The assortment problem under the sequential multinomial logit is polynomial-time solvable.
\end{corollary}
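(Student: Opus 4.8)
The plan is to obtain the corollary directly from Theorem~\ref{main_thm}. That theorem guarantees that every optimal assortment is revenue-ordered by level, so the set $\mathcal{A}$ of revenue-ordered assortments by level contains an optimal solution. Consequently $R^* = \max_{S \in \mathcal{A}} R(S)$, and it suffices to enumerate the members of $\mathcal{A}$, evaluate $R$ on each of them, and return a maximizer. The remainder of the argument is to verify that this enumeration is polynomially long and that each evaluation is polynomially cheap.

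First I would bound the cardinality of $\mathcal{A}$. By definition, every revenue-ordered assortment by level has the form $N_{1j_1} \uplus N_{2j_2}$ and is therefore completely determined by the pair of indices $(j_1,j_2)$ with $0 \le j_1 \le m_1$ and $0 \le j_2 \le m_2$. Hence $|\mathcal{A}| \le (m_1+1)(m_2+1) \le (|X|+1)^2 = \mathcal{O}(|X|^2)$, where the last bound uses $m_1 + m_2 = |X|$.

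Next I would check that the revenue of any fixed assortment is computable in polynomial time. For $S = S_1 \uplus S_2$, each of $U(S_1)$, $U(S_2)$, $\sum_{x \in S_1} u(x)r(x)$ and $\sum_{x \in S_2} u(x)r(x)$ is a sum of at most $|X|$ terms, so these quantities, and thus $\alpha(S_1)$ and $\alpha(S_2)$, are obtained in $\mathcal{O}(|X|)$ operations; Equation~\eqref{rev_short} then yields $R(S)$ in constant time.

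Combining these observations gives the algorithm and its complexity. Since the products in each level are already indexed in decreasing order of revenue, the nested families $\{N_{1j_1}\}$ and $\{N_{2j_2}\}$ are available immediately, and iterating over all $\mathcal{O}(|X|^2)$ pairs $(j_1,j_2)$ while evaluating $R$ in $\mathcal{O}(|X|)$ time per candidate produces an optimal assortment in $\mathcal{O}(|X|^3)$ time. (Maintaining prefix sums of $u(x)$ and of $u(x)r(x)$ along each level would reduce each evaluation to $\mathcal{O}(1)$ and the total to $\mathcal{O}(|X|^2)$, but the cubic bound already suffices.) I do not anticipate any real obstacle: all of the difficulty has been absorbed into Theorem~\ref{main_thm}, and what remains are only the elementary facts that $\mathcal{A}$ is of quadratic size and that $R$ is polynomially computable.
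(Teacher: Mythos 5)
Your proposal is correct and follows essentially the same route as the paper, which likewise obtains the corollary directly from Theorem~\ref{main_thm} by observing that there are at most $\mathcal{O}(|X|^2)$ revenue-ordered assortments by level and that the revenue of any given assortment is computable in polynomial time. Your explicit enumeration argument with the $\mathcal{O}(|X|^3)$ bound (and the $\mathcal{O}(|X|^2)$ refinement via prefix sums) simply fills in the details of the paper's one-line justification.
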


\section{Numerical Experiments}\label{sec:numerical}

In this section, we analyse numerically the performance of revenue-ordered assortments (RO) against our proposed strategy (ROL) by varying the number of products, the distribution of revenues and utilities in each level, and the utility of the outside option. In our experiments with up to $100$ products, we found that the optimality gap can be as large as $26.319\%$.

Each family or class of instances we tested is defined by three numbers: the number of products in the first and second level $(n_1,n_2)$, and the utility of the outside option $u_0$. In total, we tested 20 classes or family of instances, each containing $100$ instances. In each specific instance, revenues and product utilities are drawn from an uniform distribution between $0$ and $10$. We ran both strategies (RO and ROL) and we report the average and the worst optimality gap for the RO strategy, and the time taken for both strategies. These numerical experiments were conducted in Python 3.6, at a computer with 4 processors (each with 3.6 GHz CPU) and 16 GB of RAM. The computing time is reported in seconds is the average among the 100 instances in each class (or family).

\begin{table}[!htbp]
	\centering
	\begin{tabular}{cc@{\qquad}ccc@{\qquad}cc}
	\toprule
	\multirow{2}{*}{\raisebox{-\heavyrulewidth}{$(n_1,n_2)$}} &\multirow{2}{*}{\raisebox{-\heavyrulewidth}{$u_0$}} & \multicolumn{3}{c}{RO} & & \multicolumn{1}{c}{ROL} \\
	\cmidrule(r){3-5} \cmidrule(l){6-7}
	& & Avg. Gap $(\%)$ & Worst Gap $(\%)$ & Avg. Time RO (s) & & Avg. Time ROL (s) \\
	\midrule
	(5,5) & 0     & 0     & 0     & 0     &       & 0 \\
	(5,5) & 1     & 1.811 & 18.098 & 0     &       & 0.001 \\
	(5,5) & 2.5   & 3.43  & 17.416 & 0     &       & 0 \\
	(5,5) & 5     & 3.413 & 13.183 & 0     &       & 0 \\
	(5,5) & 10    & 1.923 & 12.406 & 0     &       & 0.001 \\
	(10,10) & 0     & 0     & 0     & 0     &       & 0.004 \\
	(10,10) & 1     & 3.23  & 20.669 & 0.001 &       & 0.004 \\
	(10,10) & 2.5   & 5.613 & 20.359 & 0.001 &       & 0.004 \\
	(10,10) & 5     & 5.975 & 15.521 & 0.001 &       & 0.004 \\
	(10,10) & 10    & 5.347 & 15.694 & 0.001 &       & 0.004 \\
	(20,20) & 0     & 0     & 0     & 0.003 &       & 0.04 \\
	(20,20) & 1     & 4.331 & 19.427 & 0.003 &       & 0.04 \\
	(20,20) & 2.5   & 8.523 & 21.873 & 0.003 &       & 0.039 \\
	(20,20) & 5     & 9.776 & 19.719 & 0.003 &       & 0.038 \\
	(20,20) & 10    & 9.682 & 18.771 & 0.003 &       & 0.039 \\
	(50,50) & 0     & 0     & 0     & 0.016 &       & 0.564 \\
	(50,50) & 1     & 6.315 & 26.319 & 0.016 &       & 0.549 \\
	(50,50) & 2.5   & 11.662 & 24.117 & 0.016 &       & 0.55 \\
	(50,50) & 5     & 14.94 & 24.445 & 0.016 &       & 0.551 \\
	(50,50) & 10    & 15.543 & 22.232 & 0.016 &       & 0.547 \\
	\bottomrule
\end{tabular}%
	\caption{Numerical experiments comparing the revenue ordered assortment strategy (RO) and the revenue-ordered assortments by level (ROL, which is optimal). For each class of instances, we display the average optimality gap and the worst-case gap, as well as the computing time.}
	\label{tab:uniform}%
\end{table}

Based on the results on Table \ref{tab:uniform} we can observe the following:

\begin{enumerate}
	\item As expected, although ROL takes more time than RO, it takes a small amount of time to solve the instances.
	\item When the utility of the outside option is $u_0=0$, the average and worst gap are identically zero. This is because in those cases, the optimal solution is simply selecting the highest revenue product and therefore both strategies coincide.
	\item The average gap is generally increasing as the outside option utility increases. With a high outside option, we typically expect to select more products to counterbalance the effect of the no choice alternative. This can amplify the difference between ROL and RO as the likelihood that the optimal solution of the revenue-ordered by level is indeed a revenue ordered assortment decreases.
\end{enumerate}

\section{Conclusion and Future Work}\label{sec:conclusion}
This paper studied the assortment optimization problem under the
\emph{Sequential Multinomial Logit} (SML), a discrete choice model
that generalizes the multinomial logit (MNL). Under the SML model,
products are partitioned into two levels. When a consumer is presented
with such an assortment, she first considers products in the first
level and, if none of them is appropriate, products in the second
level. The SML is a special case of the Perception Adjusted Luce
Model (PALM) recently proposed by \citet{echenique2018}. It can explain many
behavioural phenomena such as the attraction, compromise, and
similarity effects which cannot be explained by the MNL model or any
discrete choice model based on random utility.

The paper showed that the seminal concept of revenue-ordered
assortments can be generalized to the SML. In particular, the paper
proved that all optimal assortments under the SML are revenue-ordered
by level, a natural generalization of revenue-ordered assortments. As
a corollary, assortment optimization under the SML is polynomial-time
solvable.  This result is particularly interesting given that the SML
does not satisfy the
regularity condition.

The main open issue regarding this research is to generalize the results
to the PALM, which has an arbitrary number of levels. Note that one can easily extend the algorithm of revenue-ordered by level to the PALM model and it would take at most $\mathcal{O}(|X|^k)$ time where $k$ is the number of levels \footnote{Note that $\mathcal{O}(|X|^k)$ is the number assortments that are revenue-ordered by level.}. We executed our algorithm over a series of PALM instances by varying the number of levels and the revenue-ordered assortment algorithm always returned the optimal solution. Our conjecture is that the optimality result of revenue ordered assortments by level holds for the general PALM, but the problem remains open. We note that our proof technique cannot be directly applied to answer this question as some of the bounds developed in Section \ref{sec:properties-2-level} do not hold for $k>2$.

  A second interesting research avenue is to consider a new discrete choice model that allows
   decision makers to change the order in which the
levels are presented to consumers. In the SML, the level ordering
is intrinsic to products, but one may consider settings in
which decision makers can choose, not only what to show, but also the
priority associated with each of the displayed products. Another research direction is to study the assortment optimziation problem under the SML with cardinality or space constraints. Finally, it
is important to develop efficient procedures to estimate the parameters of the SML model based on historical data (e.g., \citet{van2017technical}).

	\newpage
%%relative paths
\bibliographystyle{../../../Bib/aaai}
\bibliography{../../../Bib/biblio}

\newpage
\appendix
\section{The Perception-Adjusted Luce Model}\label{App:PALM}

In this section, for the purpose of completeness, we describe the perception-adjusted Luce model (PALM) proposed by in \cite{echenique2018}. The authors recover the role of perception among alternatives using a weak order $\succsim$. The idea is that if $x\succ y$ then $x$ tends to be perceived before $y$, and whenever $x\sim y$ then $x$ and $y$ are perceived at the same time.

A \textit{Perception-Adjusted Luce Model} (PALM) is described by two parameters: a weak order $\succsim$, and a utility function $u$. She perceives elements of an offered set $S\subseteq X$ sequentially according to the equivalence classes induced by $\succsim$ (in our representation, we called them \textit{levels}). Each alternative is selected with probability defined by $\mu$, a function depending on $u$ and closely related to Luce's formula.

\begin{defn}\label{mod:PALM}
	A \emph{Perception-Adjusted Luce Model} (\textbf{PALM}) is a pair $(\succsim,u)$, where the probability of choosing a product $x$ when offering the set $S$ is:
	
	\begin{equation}\label{rho_def}
	\rho (x,S)=\mu(x,S)\cdot\prod_{\alpha \in S/\succsim : \alpha >x}(1-\mu(\alpha,S)),
	%\rho (x,S)=\mu(x,S)\cdot\prod_{k <l(x)}(1-\sum_{y\in S_k}\mu(y,S))
	\end{equation}
	%where $S_k=S\cap X_k$ and:
	where:
	\begin{equation}\label{mu_def}
	\mu (x,S)=\frac{u(x)}{\sum_{y \in S}u(y)+u_0}.
	\end{equation}
	
	$S/\succsim$ corresponds to the set of equivalence classes in which $\succsim$ partitions $S$.

	Thus, the probability of choosing a product is the probability of not choosing any products belonging to the previous levels and then selecting the product according a Luce's Model considering all the offered products. We can also write the probability of not choosing any product of the assortment:
	
	\begin{equation}\label{p0_pbb}
	\rho (x_0,S)=1-\sum_{y \in S} \rho (y,S),	
	%%this can always be written as the product of not selecting on each level too, the proof is straightforward
	\end{equation}	
	or equivalently, as the probability of not choosing on each one of the equivalence classes.
	
	\begin{equation}\label{p0_pbb_product}
	\rho (x_0,S)=\prod_{\alpha \in S/\succsim}(1-\mu(\alpha,S))
	\end{equation}	
\end{defn}

In the Perception-Adjusted Luce Model, the customer makes her selection following a sequential procedure. She considers the alternatives in sequence, following a predefined perception priority order. Choosing an alternative is conditioned to not choosing any other alternative perceived before. If none of the offered alternatives is selected, then the outside option is chosen.

Is interesting to note that setting the outside option utility to zero does not result in zero choice probability for the outside option. As explained in \cite{echenique2018}, there are two sources behind choosing the outside option in the PALM. One is the utility of the outside option, which is to the same extent as in Luce's model with an outside option. The second, and to our appreciation the one that differentiate this model from other models of choice, is due the sequential nature of choice. When a customer chooses sequentially following a perception priority order, it can happen that she checks all products in the offered set without making a choice. When this occurs, this seems to increase or bias the value of the outside option probability.

Another consequence of the functional form of the outside option probability (Equation \eqref{p0_pbb_product}), is the ability to model choice overload. In addition, this model allows violations to regularity and violations to stochastic transitivity. The interested reader is referred to \cite{echenique2018} for more details.

\newpage

\section{Appendix: Proofs}
\label{proofs}
\label{appendix}

In this section we provide the proofs missing from the main text.
\printproofs

\end{document}